\DeclareMathOperator{\Gal}{Gal}
\DeclareMathOperator{\lcm}{lcm}
\DeclareMathOperator{\Tr}{Tr}
\begin{document}
\title{A Function Field Approach Toward Good Polynomials for Further Results on Optimal LRC Codes}
\author{Ruikai ~Chen\inst{1}\and Sihem Mesnager\inst{1,2} }
\institute{Department of Mathematics, University of Paris VIII, F-93526 Saint-Denis, Laboratory Geometry, Analysis and Applications, LAGA, University Sorbonne Paris Nord, CNRS, UMR 7539,  F-93430, Villetaneuse, France,\\
\email{chen\_rk@outlook.com}  \and Telecom Paris, Polytechnic Institute of Paris, 91120 Palaiseau, France.\\
\email{smesnager@univ-paris8.fr}\\} 

\maketitle

\begin{abstract}

Because of the recent applications to distributed storage systems, researchers have introduced a new class of block codes, i.e., locally recoverable (LRC) codes. LRC codes can recover information from erasure(s) by accessing a small number of erasure-free code symbols and increasing the efficiency of repair processes in large-scale distributed storage systems. In this context, Tamo and Barg first gave a breakthrough by cleverly introducing a good polynomial notion. Constructing good polynomials for locally recoverable codes achieving Singleton-type bound (called optimal codes) is challenging and has attracted significant attention in recent years. This article aims to increase our knowledge of good polynomials for optimal LRC codes. Using tools from algebraic function fields and Galois theory, we continue investigating those polynomials and studying them by developing the Galois theoretical approach initiated by Micheli in 2019. Specifically, we push further the study of a crucial parameter $\mathcal G(f)$ (of a given polynomial $f$), which measures how much a polynomial is ``good'' in the sense of LRC codes. We provide some characterizations of polynomials with minimal Galois groups and prove some properties of finite fields where polynomials exist with a specific size of Galois groups. We also present some explicit shapes of polynomials with small Galois groups. For some particular polynomials $f$, we give the exact formula of $\mathcal {G} (f)$.
\end{abstract}

\noindent\textbf{Keywords:} Finite fields $\cdot$ Algebraic function fields $\cdot$ Galois groups $\cdot$ Good polynomials $\cdot$ LRC (Locally Recoverable) codes $\cdot$ Coding theory $\cdot$ Dickson polynomials.

\noindent\textbf{Mathematics Subject Classification:} 12E05, 11C08, 94B05.

\section{Introduction}

Locally recoverable (LRC) codes can recover information from erasure(s) by accessing a small number of erasure-free code symbols and increasing the efficiency of repair processes in large-scale distributed storage systems. LRC codes and their variants have been extensively studied in recent years. In 2014, Tamo and Barg proposed in a very remarkable paper \cite{tamo2014} a family of locally recoverable codes via so-called \textit{good polynomials}. For an LRC code of locality $r$, a polynomial $f$ over the finite field $\mathbb F_q$ of $q$ elements (where $q$ is a prime power) is called a good polynomial, if
\begin{enumerate}
\item the degree of $f$ is $r+1$;
\item there exists a partition $\{A_1,\dots,A_{\frac{N}{r+1}}\}$ of a set $A\subseteq\mathbb F_q$ of size $N$ into sets of size $r+1$ such that $f$ as a polynomial function is constant on each set $A_i$ in the partition.
\end{enumerate}
A good polynomial is a key ingredient for constructing optimal linear LRC codes. Tamo and Barg also constructed some polynomials with some restrictions. If there is an additive or multiplicative subgroup of order $n$ of $\mathbb F_q$ (i.e., $q\equiv0,1\pmod n$), then the annihilator polynomial of the subgroup is constant on each of its cosets. Based on their work, Liu, Mesnager, and Chen (\cite{liu2018}) presented more general construction approaches using function composition. Also, Liu, Mesnager, and Tang (\cite{liu2020}) have proved that the well-known Dickson polynomials and their composition with some functions are good candidates for good polynomials. Very recently in \cite{Chen-Mesnager-Zhao2021}, good polynomials of low degree over finite fields have been characterized completely, leading to optimal LRC with new flexible localities.

In the sense of coding theory, if there exists $m$ number of such subsets of $\mathbb F_q$ on which $f$ is constant, then one can construct a locally recoverable code of length $mn$. Obviously, those polynomials with $m$ large are preferred. Therefore, it is natural to introduce a parameter indicating how ``good" a polynomial is. For a polynomial $f$ of degree $n$ over $\mathbb F_q$, define 
\[\mathcal G(f)=\left|\{c\in\mathbb F_q\mid f(T)-c\text{ has $n$ distinct roots in $\mathbb F_q$}\}\right|\]
(where $|E|$ denotes the cardinality of a finite set $E$). By a simple investigation we have $\mathcal G(f)\le\lfloor q/n\rfloor$. Micheli (\cite{micheli2020}) discussed this problem in the context of algebraic function fields and Galois theory, pointing out that $\mathcal G(f)$ can be estimated by the order of its corresponding Galois group. In short, given an extension of rational function fields $\mathbb F_q(x)/\mathbb F_q(t)$ defined by $f(x)=t$ with Galois closure $M$, if some condition is satisfied, then $\mathcal G(f)$ is close to $q/[M:\mathbb F_q(t)]$, with an error term $O(\sqrt q)$. Note that up to the error term, the quantity $\mathcal G(f)$ must be $q/m$ for some divisor $m$ of $n!$. We can then characterize the Galois closure $M$, working on the function fields defined by the polynomial $f$, instead of studying the algebraic structure of $\mathbb F_q$.

The remainder of the paper is organized as follows. In Section \ref{pre}, we provide some background on algebraic function fields and notation used through this study. In Section \ref{properties}, we characterize those polynomials with minimal Galois groups and prove some properties of finite fields where polynomials exist with a specific size of Galois groups. By showing that, we know that expected good polynomials do not exist under several circumstances. In Section \ref{instances}, we present some explicit forms of polynomials with small Galois groups by considering the Dickson polynomials of the first kind and powers of linearized polynomials. For some particular polynomials $f$, the exact formula of $\mathcal G(f)$ is given.

\section{Preliminaries}\label{pre}

Let us first recall some basic concepts on algebraic function fields (see \cite{stichtenoth2009} for details). For an algebraic function field $F/K$ ($F$ is a finite extension of $K(t)$
for some $t$ transcendental over $K$), the full constant field consists of all algebraic elements over $K$ in $F$. For a place of $F/K$, let $\mathcal O_P$ be its valuation ring and $v_P$ its discrete valuation. There is a one-to-one correspondence between the places and the discrete valuations of $F/K$. The triangle inequality for a discrete valuation is
\[v_P(a+b)\ge\min\{v_P(a),v_P(b)\},\]
for all $a,b\in F$, where the equality holds if $v_P(a)\ne v_P(b)$. The place $P$ is called a zero of $a\in F$ if $v_P(a)>0$, and a pole if $v_P(a)<0$. Note that $P$ is the unique maximal ideal of $\mathcal O_P$ and $K\subseteq\mathcal O_P$, so $K$ can be embedded into $\mathcal O_P/P$. The degree of $P$ is defined as $\deg(P)=[\mathcal O_P/P:K]$, and a place of degree one is called a rational place.

Assume that $K$ is the full constant field of $F/K$ and let $F^\prime/K^\prime$, with full constant field $K^\prime$, be a finite separable extension of $F/K$. We say a place $P^\prime$ of $F^\prime/K^\prime$ lies above $P$ (or $P$ lies below $P^\prime$) if $P\subseteq P^\prime$. In this case, there exists a positive integer $e=e(P^\prime\mid P)$ such that $v_{P^\prime}(a)=e\cdot v_P(a)$ for all $a\in F$, called the ramification index of $P^\prime$ over $P$. With $\mathcal O_{P^\prime}/P^\prime$ regarded as an extension of $\mathcal O_P/P$, the relative degree of $P^\prime$ over $P$ is $f(P^\prime\mid P)=[\mathcal O_{P^\prime}/P^\prime:\mathcal O_P/P]$. These two numbers satisfy
\begin{equation}\label{equality}\sum_{P^\prime}e(P^\prime\mid P)f(P^\prime\mid P)=[F^\prime:F],\end{equation}
where the sum is extended over all places $P^\prime$ of $F^\prime/K^\prime$ lying above $P$. If $e(P^\prime\mid P)>1$, then $P^\prime$ is said to be ramified in $F^\prime/F$, and if there exists a ramified place of $F^\prime/K^\prime$ lying above $P$, then $P$ is said to be ramified in $F^\prime/F$. Furthermore, a place $P$ of $F$ is said to split completely in $F^\prime$ if $e(P^\prime\mid P)=f(P^\prime\mid P)=1$ for any place $P^\prime$ of $F^\prime/K^\prime$ lying above $P$. If there is a place of $F^\prime/K^\prime$ lying above a rational place of $F/K$ with relative degree one, then $K^\prime=K$. Particularly, if $F^\prime/F$ is Galois, then the Galois group $\Gal(F^\prime/F)$ acts transitively on the set of places of $F^\prime/K^\prime$ lying above $P$, where each place has the same ramification index and relative degree over $P$. As a consequence, \eqref{equality} becomes
\begin{equation}\label{equality_Galois}r\cdot e(P^\prime\mid P)f(P^\prime\mid P)=[F^\prime:F],\end{equation}
for the number $r$ of places of $F^\prime/K^\prime$ lying above $P$.

If $F=K(t)$ for some $t$ transcendental over $K$, then $F/K=K(t)/K$ is called a rational function field. A place of $K(t)/K$ is either the infinite place (the pole of $t$), or the place corresponding to the localization of $K[t]$ at an irreducible polynomial $p(t)$. The latter is simply denoted by $(p(t))$ if there is no ambiguity. We have special interest in a class of extensions of rational function fields. For two relatively prime polynomials $f_0,f_1$ over $K$, let $x$ be an element in some extension of $K(t)$ satisfying $f_0(x)/f_1(x)=t$. Then $K(x)$, as an extension of $K(t)$, is also a rational function field. The minimal polynomial of $x$ over $K(t)$ is $f_0(T)-tf_1(T)\in K(t)[T]$, so that $[K(x):K(t)]=\max\{\deg(f_0),\deg(f_1)\}$. If $f_1(x)=1$ and $K(x)/K(t)$ is separable, then a place of $K(x)$ is ramified in $K(x)/K(t)$ if and only if it is the infinite place or a zero of $f_0^\prime(x)$, where $f_0^\prime$ denotes the formal derivative of $f_0$. In general, for $c\in K$, if $f_0(T)-c$ is factored into irreducible polynomials in $K[T]$ as $\prod_{i=1}^r\varphi_i(T)^{e_i}$, then exactly the $r$ places $(\varphi_1(x)),\dots,(\varphi_r(x))$ of $K(x)$ lie above the place $(t-c)$ of $K(t)$, with $e((\varphi_i(x))\mid(t-c))=e_i$ and $f((\varphi_i(x))\mid(t-c))=\deg(\varphi_i)$.

For our purpose, the two lemmas are also necessary.

\begin{lemma}[{\cite[Lemma 3.9.5]{stichtenoth2009}}]\label{splitting}
Let $F^\prime/F$ be a finite separable extension of function fields and let $L$ be the Galois closure of $F^\prime/F$. If a place $P$ of $F$ splits completely in $F^\prime/F$, then $P$ also splits completely in $L/F$.
\end{lemma}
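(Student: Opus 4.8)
The plan is to pass to the Galois extension $L/F$ — which is Galois precisely because $L$ is a Galois closure — and to rephrase ``splits completely'' in terms of decomposition groups. Write $G=\Gal(L/F)$ and let $H=\Gal(L/F^\prime)\le G$ be the subgroup whose fixed field is $F^\prime$. The single group-theoretic fact I would record at the outset is that, since $L$ is the \emph{smallest} normal extension of $F$ containing $F^\prime$ (equivalently, the compositum of all conjugates $\sigma(F^\prime)$ for $\sigma\in G$), the normal core $\bigcap_{\sigma\in G}\sigma H\sigma^{-1}$ is trivial.

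Next, I would fix a place $Q$ of $L$ lying above $P$ and let $D=D(Q\mid P)\le G$ be its decomposition group. Since $L/F$ is Galois, the places of $L$ above $P$ form a single $G$-orbit and each satisfies $e\cdot f=|D|$; hence $P$ splits completely in $L/F$ if and only if $D=\{1\}$. The entire task therefore reduces to proving $D=\{1\}$.

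The core computation is to show $D\subseteq H$. Put $P^\prime=Q\cap F^\prime$, a place of $F^\prime$ lying above $P$, and recall that the decomposition group of $Q$ in the (Galois) extension $L/F^\prime$ is exactly $D\cap H$, so that $|D\cap H|=e(Q\mid P^\prime)\,f(Q\mid P^\prime)$. Now the hypothesis enters: because $P$ splits completely in $F^\prime/F$ we have $e(P^\prime\mid P)=f(P^\prime\mid P)=1$, and multiplicativity of $e$ and $f$ along the tower $Q\mid P^\prime\mid P$ gives $e(Q\mid P^\prime)=e(Q\mid P)$ and $f(Q\mid P^\prime)=f(Q\mid P)$. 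Thus $|D\cap H|=|D|$, which forces $D\subseteq H$. Running the identical argument with $Q$ replaced by $\sigma Q$ (again a place above $P$, with decomposition group $\sigma D\sigma^{-1}$, and with $\sigma Q\cap F^\prime$ again splitting completely) yields $\sigma D\sigma^{-1}\subseteq H$ for every $\sigma\in G$. Consequently $D\subseteq\bigcap_{\sigma\in G}\sigma^{-1}H\sigma=\{1\}$, and $P$ splits completely in $L/F$.

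The step I expect to be the genuine obstacle is the bookkeeping identity $D(Q\mid P^\prime)=D\cap H$ together with the order comparison $|D|=e(Q\mid P)f(Q\mid P)$; once this local-to-global dictionary is in hand, the rest is formal. The conceptual crux is the conjugation move — recognizing that complete splitting in $F^\prime/F$ controls not merely $H$ but \emph{every} conjugate $\sigma H\sigma^{-1}$, so that $D$ is trapped inside the core of $H$, which vanishes exactly because $L$ is the Galois closure.
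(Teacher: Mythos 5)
Your proof is correct. Note that the paper itself offers no argument for this lemma --- it is imported by citation from Stichtenoth (Lemma 3.9.5) --- so there is no in-paper proof to measure you against; the relevant comparison is with the textbook. Your route is the standard one and it is complete: the reduction of ``splits completely in $L/F$'' to triviality of the decomposition group $D(Q\mid P)$ (via transitivity of the $G$-action, orbit--stabilizer, and the fundamental identity $r\cdot ef=[L:F]$, which together give $|D|=ef$ without any appeal to residue-field subtleties); the identification $D(Q\mid P^\prime)=D\cap H$ inside the Galois extension $L/F^\prime$; the use of multiplicativity of $e$ and $f$ in the tower $Q\mid P^\prime\mid P$ to force $D\subseteq H$; and the conjugation step trapping $D$ in the normal core of $H$, which is trivial precisely because $L$ is the Galois closure. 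Stichtenoth's own proof is phrased slightly differently --- $L$ is realized as the compositum of the conjugate fields $\sigma(F^\prime)$, and one checks that complete splitting passes to each conjugate and then to the compositum --- but this is the field-theoretic mirror of your group-theoretic core argument: triviality of $\bigcap_{\sigma\in G}\sigma H\sigma^{-1}$ is exactly the statement that the conjugates $\sigma(F^\prime)$ generate $L$. The two proofs buy the same thing at the same price; yours localizes all the work in one decomposition group, which arguably makes the role of the hypothesis (it controls every conjugate of $H$, not just $H$) more visible.
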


\begin{lemma}[{\cite[Lemma 6.8]{lidl1993}}]\label{decomposable}
Let $K$ be an arbitrary field and $t=f(x)\in K[x]\setminus K$ for a polynomial $f$ over $K$, such that $K(x)$ and $K(t)$ are rational function fields. Then every intermediate field of $K(x)/K(t)$ is of the form $K(h(x))$ for some polynomials $g,h$ over $K$ such that $f=g\circ h$.
\end{lemma}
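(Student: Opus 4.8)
The plan is to combine Lüroth's theorem with a careful analysis of the ramification of the infinite place. Let $L$ be an intermediate field, $K(t)\subseteq L\subseteq K(x)$. Since $t=f(x)$ is non-constant it is transcendental over $K$, so $K\subsetneq L$, and Lüroth's theorem gives $L=K(w)$ for some $w\in K(x)$; in particular $L$ is again a rational function field. What remains is to show that $w$ may be taken to be a polynomial $h(x)$ in $x$, and that the resulting expression of $t=f(x)$ over $K(h(x))$ is itself polynomial, i.e.\ $f=g\circ h$.

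First I would exploit that $f$ is a \emph{polynomial}. The pole of $t=f(x)$ in $K(x)$ can occur only at the pole $\infty_x$ of $x$, so $\infty_x$ is the unique place of $K(x)$ lying above the infinite place $\infty_t$ of $K(t)$; by the fundamental equality \eqref{equality} it is totally ramified, with $e(\infty_x\mid\infty_t)=\deg f$ and relative degree one. Let $Q$ be the place of $L$ lying below $\infty_x$. Because $\infty_x$ is the only place of $K(x)$ above $\infty_t$, it is a fortiori the only place of $K(x)$ above $Q$, and multiplicativity of the relative degree in the tower $K(t)\subseteq L\subseteq K(x)$ forces $f(Q\mid\infty_t)=1$. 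Hence $Q$ is a rational place of $L=K(w)$.

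Now a rational place of $K(w)$ is either the pole of $w$ or of the form $(w-c)$ for some $c\in K$, and $K(w)=K\left(\frac{\alpha w+\beta}{\gamma w+\delta}\right)$ for every invertible fractional linear substitution. I would therefore apply such a substitution to arrange that $Q$ is exactly the pole of the (new) generator $w$. Then, viewed as an element of $K(x)$, $w$ has poles only among the places above $Q$; but $\infty_x$ is the only such place, so the only pole of $w$ in $K(x)$ is $\infty_x$, which forces $w$ to be a polynomial $h(x)$. This gives $L=K(h(x))$. The identical pole bookkeeping applies to $t=f(x)$: its only pole in $K(x)$ is $\infty_x$, hence its only pole in $L=K(h(x))$ is $Q$, the pole of $h(x)$; therefore $t$ is a polynomial in $h(x)$, say $t=g(h(x))$, which is precisely $f=g\circ h$.

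The main obstacle is the middle step: upgrading the abstract Lüroth generator $w$, a priori an arbitrary rational function, to a genuine polynomial in $x$. Everything hinges on the fact that, since $f$ is a polynomial, the infinite place is totally \emph{and uniquely} ramified; this is what makes the intermediate place $Q$ rational and what forces the normalized generator to have a single pole lying over $\infty_x$. Once this normalization is secured, the passage to $f=g\circ h$ is a direct repetition of the same pole analysis and requires no further work.
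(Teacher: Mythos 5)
The paper itself offers no proof of this lemma: it is quoted directly from the reference cited as \cite{lidl1993} (Lemma 6.8 there), so there is no internal argument to compare yours against, and your proof must stand on its own. It does. The chain of steps is correct and complete: Lüroth's theorem (valid over an arbitrary field) gives $L=K(w)$; since $t=f(x)$ is a polynomial, $\infty_x$ is the unique place of $K(x)$ above $\infty_t$, with $e(\infty_x\mid\infty_t)=\deg f$ and relative degree one; multiplicativity in the tower makes the intermediate place $Q$ below $\infty_x$ rational in $L$; a fractional linear change of the Lüroth generator puts $Q$ at the pole of $w$; and then the pole bookkeeping (poles of $w$ in $K(x)$ lie only above $Q$, hence only at $\infty_x$) forces $w=h(x)\in K[x]$, and the identical argument applied to $t$ inside $L=K(h(x))$ forces $t=g(h(x))$ with $g\in K[w]$ a polynomial, i.e.\ $f=g\circ h$. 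This is essentially the standard function-field proof of the decomposition lemma.

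One small caveat worth fixing: the lemma allows an arbitrary field $K$ and an arbitrary non-constant $f$, so $K(x)/K(t)$ may be inseparable (e.g.\ $f(T)=T^p$ in characteristic $p$), whereas the fundamental equality \eqref{equality} is stated in the paper only for finite \emph{separable} extensions. This does not damage your argument, but you should justify the totally-ramified claim without that hypothesis: $e(\infty_x\mid\infty_t)=\deg f$ follows directly from comparing valuations, since $v_{\infty_x}(t)=-\deg f$ while $v_{\infty_t}(t)=-1$; and to conclude $f(\infty_x\mid\infty_t)=1$ you only need the inequality $\sum_i e_if_i\le[K(x):K(t)]$, which holds for any finite extension of function fields. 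With that substitution, the proof is airtight in full generality.
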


Throughout this paper, we use the notation as follows. Given a finite field $\mathbb F_q$ with characteristic $p$, let $f$ be a polynomial of degree $n<q$ over $\mathbb F_q$. Consider the rational function field $\mathbb F_q(t)$ with $t$ transcendental over $\mathbb F_q$, and its finite extension $\mathbb F_q(x)$ defined by $f(x)=t$. Note that if $f(T)\in\mathbb F_q[T^p]$, then $f(T)-c$ will never have $n$ distinct roots in $\mathbb F_q$ for any $c\in\mathbb F_q$. Therefore, suppose $f(T)\notin\mathbb F_q[T^p]$, so that $\mathbb F_q(x)/\mathbb F_q(t)$ is a finite separable extension. Then the splitting field of $f(T)-t$ over $\mathbb F_q(t)$ is exactly the Galois closure of $\mathbb F_q(x)/\mathbb F_q(t)$, denoted by $M$. In this sense, the Galois group of $f$ is defined to be $\Gal(M/\mathbb F_q(t))$. We also assume that $f$ is monic and $f(0)=0$, since $\mathbb F_q(t)=\mathbb F_q(at+b)$ for any $a,b\in\mathbb F_q$ with $a\ne0$.

The properties of $f$ as a mapping on $\mathbb F_q$ are often associated with the structure of its Galois group. Specifically speaking, for $c\in\mathbb F_q$, if $f(T)-c$ splits into $n$ distinct linear factors in $\mathbb F_q[T]$, then the rational place $(t-c)$ of $\mathbb F_q(t)$ splits completely in $\mathbb F_q(x)$. In 1970, Cohen showed the distribution of $f(T)-c$ with prescribed factorization as $c$ varies over $\mathbb F_q$ (\cite{cohen1970}), and recently Micheli gave a more specific description in \cite{micheli2020}. The main result we need is presented in the following lemma.

\begin{lemma}[\cite{micheli2020}]\label{1}
If the full constant field of $M/\mathbb F_q$ is $\mathbb F_q$, then
\[\mathcal G(f)=\frac q{[M:\mathbb F_q(t)]}+O(\sqrt q).\]
More precisely,
\[\frac{q+1-2g\sqrt q}{[M:\mathbb F_q(t)]}-\frac R2\le\mathcal G(f)\le\frac{q+1+2g\sqrt q}{[M:\mathbb F_q(t)]},\]
where $R$ is the number of ramified places of $M/\mathbb F_q(t)$ of degree one, and $g$ is the genus of $M/\mathbb F_q$, both bounded by a constant independent of $q$.
\end{lemma}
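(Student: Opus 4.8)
The plan is to reinterpret $\mathcal G(f)$ as the number of degree-one places of $\mathbb F_q(t)$ that split completely in $M$, and then to count the rational places of $M$ via the Hasse--Weil bound. Write $d=[M:\mathbb F_q(t)]$. First I would set up the dictionary between factorization and splitting: by the place--factorization correspondence recalled above, for $c\in\mathbb F_q$ the polynomial $f(T)-c$ has $n$ distinct roots in $\mathbb F_q$ exactly when the rational place $(t-c)$ of $\mathbb F_q(t)$ splits completely in $\mathbb F_q(x)$. Lemma \ref{splitting} upgrades this to complete splitting in the Galois closure $M$, and conversely complete splitting in $M$ descends to the intermediate field $\mathbb F_q(x)$ because ramification indices and relative degrees multiply in towers. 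Hence $\mathcal G(f)=\bigl|\{c\in\mathbb F_q\mid (t-c)\text{ splits completely in }M/\mathbb F_q(t)\}\bigr|$.

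Next I would count $N$, the number of rational places of $M/\mathbb F_q$. The hypothesis that $\mathbb F_q$ is the full constant field is exactly what makes this count meaningful: it guarantees that a place of $M$ of relative degree one over a rational place of $\mathbb F_q(t)$ is a genuine rational place, and it is what gives the Hasse--Weil estimate its main term $q+1$. Every place $(t-c)$ that splits completely contributes precisely $d$ rational places of $M$, each unramified and of relative degree one by \eqref{equality_Galois}; since $f$ is a polynomial of degree $n\ge 2$ the infinite place is totally ramified and never splits, so these contributions total $d\cdot\mathcal G(f)$. Conversely, any rational place of $M$ \emph{not} arising in this way lies above a degree-one place of $\mathbb F_q(t)$ of relative degree one that fails to split completely, and \eqref{equality_Galois} then forces that place below to be ramified; thus the remaining rational places of $M$ are exactly the ramified ones, of which there are $R$. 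This yields the clean identity $N=d\cdot\mathcal G(f)+R$, whence $\mathcal G(f)=(N-R)/d$. Substituting the Hasse--Weil bounds $q+1-2g\sqrt q\le N\le q+1+2g\sqrt q$ gives the upper estimate after discarding $-R$, and the lower estimate after the simplification $R/d\le R/2$ (valid since $d\ge 2$).

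Finally I would verify that $g$ and $R$ stay bounded independently of $q$. Ramification in $M/\mathbb F_q(t)$ can occur only above the infinite place and above the at most $n-1$ places of $\mathbb F_q(t)$ lying below the critical values $f(\beta)$, where $\beta$ runs through the roots of $f^\prime$; this branch locus has size at most $n$ and does not grow with $q$. Since $\Gal(M/\mathbb F_q(t))$ embeds in $S_n$ we have $d\mid n!$, so the number $R$ of ramified rational places is at most $n\cdot n!$, and a Riemann--Hurwitz computation bounds $g$ in terms of $n$ and $d$ alone. I expect the genuine obstacle to lie precisely in this uniform control of the error terms: in positive characteristic the ramification above the branch points may be wild, so the different exponents, and hence $g$, must be bounded without recourse to the tame formula $d_Q=e_Q-1$. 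Once the ramification data is seen to depend only on $f$ (equivalently on $n$ and the characteristic $p$) and not on the size of the field, the two displayed inequalities follow from the counting identity above.
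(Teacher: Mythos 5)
The paper itself offers no proof of this lemma --- it is quoted from \cite{micheli2020} --- so your proposal can only be measured against the source's argument, which it in fact reconstructs faithfully. The counting core is correct and complete: the dictionary between ``$f(T)-c$ has $n$ distinct roots in $\mathbb F_q$'' and complete splitting of $(t-c)$ in $\mathbb F_q(x)$, the upgrade to $M$ via Lemma \ref{splitting} together with the descent back to $\mathbb F_q(x)$ by multiplicativity of $e$ and $f$ in towers, the exclusion of the infinite place when $n\ge2$, the identity $N=[M:\mathbb F_q(t)]\,\mathcal G(f)+R$ with $R$ read as the number of ramified rational places of $M$ (the other reading, with $R$ counting ramified degree-one places of $\mathbb F_q(t)$, also yields the stated bound, since such a place carries at most $[M:\mathbb F_q(t)]/2$ rational places of $M$ above it), and finally Hasse--Weil together with $[M:\mathbb F_q(t)]\ge n\ge 2$ to pass from $R/[M:\mathbb F_q(t)]$ to $R/2$. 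The constant-field hypothesis is used exactly where you place it, namely to make the Hasse--Weil count of rational places of $M$ valid with main term $q+1$.

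The genuine gap is the one you flag but do not close: the claim that $g$ is bounded by a constant independent of $q$. Your Riemann--Hurwitz suggestion does not deliver this ``in terms of $n$ and $d$ alone,'' because with wild ramification the different exponent at a place is not determined by its ramification index, and nothing in your setup controls the higher ramification groups; so this is not a technicality you may defer --- it is the only unproved step. The standard, characteristic-free repair is Castelnuovo's inequality (\cite[Theorem 3.11.3]{stichtenoth2009}): $M$ is the compositum of the genus-zero fields $\mathbb F_q(x_1),\dots,\mathbb F_q(x_n)$ generated by the roots of $f(T)-t$, and iterating the bound $g\le n_1g_1+n_2g_2+(n_1-1)(n_2-1)$, where $n_i=[F_1F_2:F_i]$ and $g_i$ is the genus of $F_i$, along this tower bounds $g$ by a function of $n$ alone, with no tameness hypothesis; this is how the source handles it. Your bound for $R$ (branch locus of size at most $n$, fibres of size at most $n!$) is fine. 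Note that without the Castelnuovo step the two displayed inequalities still hold as stated --- they involve $g$ explicitly --- but the ``$O(\sqrt q)$'' clause and the final assertion of the lemma would remain unproved.
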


The error term $O(\sqrt q)$ is neglectable when $q$ is sufficiently large. 
Therefore, in what follows, we study the splitting field $M$ of $f(T)-t$ over $\mathbb F_q(t)$, as well as the corresponding Galois group. To construct a good polynomial, we can turn to find a polynomial $f$ with $[M:\mathbb F_q(t)]$ as small as possible, such that $\mathbb F_q$ is the full constant field of $M/\mathbb F_q$. Meanwhile, by showing these polynomials' properties, it is clear that good polynomials do not exist in some cases.

\section{Properties of Polynomials with Certain Size of Galois Groups}\label{properties}

By Lemma \ref{1}, the number $\mathcal G(f)$, which we are interested in, is approximately $q/[M:\mathbb F_q(t)]$. For the Galois group $G$ of $f$, we have $n!\ge|G|=[M:\mathbb F_q(t)]=n[M:\mathbb F_q(x)]\ge n$. In most cases, $|G|$ is close to $n!$, so it is significant to discover the properties of $f$ with $|G|$ small enough. Then we can assert that the desired polynomials do not exist with some given conditions. To begin with, we study the extreme case $|G|=n$, or equivalently $M=\mathbb F_q(x)$.

\begin{proposition}\label{deg_1}
The splitting field of $f(T)-t$ over $\mathbb F_q(t)$ is $\mathbb F_q(x)$, if and only if
\[f(T)=(h(T)+c)^k-c^k,\]
where $n=kp^l$ for $k,l\in\mathbb N$ with $q\equiv p^l\equiv1\pmod k$, $h(T)=\sum_{b\in B}(T-b)$ for an additive subgroup $B$ of order $p^l$ in $\mathbb F_q$ such that $\omega B=B$, $\omega$ a primitive $k$-th root of unity in $\mathbb F_q$, and $c=h(\alpha)$ for some $\alpha\in\mathbb F_q$.
\end{proposition}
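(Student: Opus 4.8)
The plan is to prove both implications by rephrasing the hypothesis $M=\mathbb F_q(x)$ as the assertion that $\mathbb F_q(x)/\mathbb F_q(t)$ is \emph{already} Galois (being its own Galois closure), with group $G$ of order $n$, and then pinning down the possible shapes of $G$. The crucial first step is to observe that the pole of $x$ (the infinite place of $\mathbb F_q(x)$) is \emph{totally} ramified over the pole of $t$: since $t=f(x)$ with $f$ a polynomial of degree $n$, the only pole of $f(x)$ is the infinite place, with $v_\infty(f(x))=-n$ while $v_\infty(t)=-1$, so by \eqref{equality} it is the unique place above it, with ramification index $n$. Being the unique place above a place of $\mathbb F_q(t)$, it is fixed by every $\sigma\in G$; as the automorphisms of the rational function field $\mathbb F_q(x)/\mathbb F_q$ fixing the pole of $x$ are exactly the affine maps $x\mapsto ax+b$ with $a\in\mathbb F_q^\ast,\ b\in\mathbb F_q$, we conclude $G\subseteq\{x\mapsto ax+b\}$, each element satisfying $f(ax+b)=f(x)$.

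Next I would read off the group structure. The translations $B=\{b\in\mathbb F_q\mid f(x+b)=f(x)\}$ form the kernel of the map $G\to\mathbb F_q^\ast,\ \sigma\mapsto a_\sigma$; being an additive subgroup, $|B|=p^l$, while the image is cyclic of order $k$ generated by a primitive $k$-th root of unity $\omega$, whence $n=|G|=kp^l$ and $k\mid q-1$. A short conjugation computation shows that conjugating $x\mapsto x+b$ by $x\mapsto\omega x+b_0$ gives $x\mapsto x+\omega b$, so normality of $B$ forces $\omega B=B$. Consequently $B$ is a vector space over $\mathbb F_p(\omega)=\mathbb F_{p^d}$, where $d$ is the order of $p$ modulo $k$; hence $d\mid l$ and $p^l\equiv(p^d)^{l/d}\equiv1\pmod k$. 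Introducing the linearized annihilator $h(T)=\prod_{b\in B}(T-b)$ of $B$, one has $\mathbb F_q(h(x))=\mathbb F_q(x)^B$ (it is $B$-invariant of the correct degree $p^l$; cf.\ Lemma \ref{decomposable}), and from $\omega B=B$ together with $\omega^{p^l}=\omega$ one gets $h(\omega T)=\omega\,h(T)$, so $h$ is $\mathbb F_p(\omega)$-linear.

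To finish the forward direction I would descend to the cyclic quotient. The group $G/B\cong\langle\omega\rangle$ acts on $\mathbb F_q(h(x))$ with fixed field $\mathbb F_q(t)$, the generator $x\mapsto\omega x+b_0$ inducing $h(x)\mapsto\omega\,h(x)+h(b_0)$. Translating by its fixed point $c_0=h(b_0)/(1-\omega)$, i.e.\ setting $w=h(x)-c_0$, turns this into the scaling $w\mapsto\omega w$, so $\mathbb F_q(t)=\mathbb F_q(w^k)$. Now $t=f(x)$ and $w^k=(h(x)-c_0)^k$ are two monic degree-$n$ polynomials in $x$ generating the same subfield and both totally ramified at infinity, hence related by an affine substitution $t=(h(x)-c_0)^k+b'$; imposing $f(0)=0$ fixes $b'$ and, setting $c=-c_0$, yields $f(T)=(h(T)+c)^k-c^k$. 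Finally, since the image $h(\mathbb F_q)$ is an $\mathbb F_p(\omega)$-subspace and $c=h(b_0)/(\omega-1)$ with $1/(\omega-1)\in\mathbb F_p(\omega)$, we get $c\in h(\mathbb F_q)$, i.e.\ $c=h(\alpha)$ for some $\alpha\in\mathbb F_q$.

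For the converse I would simply solve $f(T)=f(x)$ directly. From $(h(T)+c)^k=(h(x)+c)^k$ and $q\equiv1\pmod k$ one gets $h(T)+c=\omega^i(h(x)+c)$ for some $i$; using $h(\omega x)=\omega\,h(x)$ and the $\mathbb F_p(\omega)$-linearity of $h$ (so that $(\omega^i-1)c=h((\omega^i-1)\alpha)$), this rearranges to $h\big(T-\omega^i x-(\omega^i-1)\alpha\big)=0$, i.e.\ $T=\omega^i x+(\omega^i-1)\alpha+b$ with $b\in B$. These $kp^l=n$ elements are pairwise distinct (distinct $x$-coefficients $\omega^i$, distinct translates $b$) and lie in $\mathbb F_q(x)$, so $f(T)-t$ splits completely there and $M=\mathbb F_q(x)$; separability (hence $f\notin\mathbb F_q[T^p]$) follows since $p\nmid k$. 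I expect the main obstacle to be the opening structural step: recognizing that total ramification at infinity confines $G$ to the one-dimensional affine group, since everything else — the factorization $n=kp^l$, the congruences $q\equiv p^l\equiv1\pmod k$, and the descent producing the Kummer-type form — flows from this identification together with routine bookkeeping on the linearized polynomial $h$.
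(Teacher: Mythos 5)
Your proposal is correct and shares the paper's overall skeleton --- confine $G$ to affine substitutions $x\mapsto ax+b$, split off the translation subgroup $B$ of order $p^l$ with cyclic quotient $\langle\omega\rangle$ of order $k$, deduce $\omega B=B$ from normality, and reconstruct $f$ through the intermediate field $\mathbb F_q(h(x))$ --- but you execute the decisive steps by genuinely different means. For the opening structural step, the paper does not use ramification at all: it notes that every root $g_i(x)\in\mathbb F_q(x)$ of $f(T)-t$ satisfies $f(g_i(x))-f(x)=0$, hence is integral over the integrally closed ring $\mathbb F_q[x]$, hence is a polynomial in $x$, and then linear by the degree count $\deg(g_i)=[\mathbb F_q(x):\mathbb F_q(g_i(x))]=1$; your route via total ramification of the infinite place (unique fiber point, so $G$-stable, so $G$ lands in the affine stabilizer) is more geometric but equally valid. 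Your normality of $B$ comes for free as the kernel of $\sigma\mapsto a_\sigma$, whereas the paper proves it by an order argument (elements outside $H$ have order dividing $k$, coprime to $p$, so $H$ is the unique Sylow $p$-subgroup). For the reconstruction, the paper invokes Lemma \ref{decomposable} together with its separately computed extreme cases (all translations gives $f=\prod_i(x-b_i)$; cyclic $G$ gives a shifted $n$-th power) to identify $g$ and $h$ in $f=g\circ h$, while you build $\mathbb F_q(x)^B=\mathbb F_q(h(x))$ directly, diagonalize the residual cyclic action by the affine shift $w=h(x)-c_0$, and pin down $t=w^k+b'$ by the two-monic-generators argument; this makes your treatment uniform, avoiding the paper's case split on $\gcd(n,q-1)$ and $n=k$ versus $n>k$ as well as its $n>2$ hypothesis. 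You also prove the converse explicitly (solving $(h(T)+c)^k=(h(x)+c)^k$ and exhibiting the $n$ distinct affine roots), where the paper merely calls it obvious --- a genuine gain in completeness. The only loose end is the degenerate case $k=1$ (where $\omega=1$ and the fixed point $c_0=h(b_0)/(1-\omega)$ is undefined), but there $G=B$, $\mathbb F_q(t)=\mathbb F_q(h(x))$ directly, and $f=h$, so the formula still holds; it would be worth one sentence to say so.
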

\begin{proof}
Suppose $n>2$. Since $x$ is one root of $f(T)-t$, and all other roots, denoted by $g_1(x),\dots,g_{n-1}(x)$, lie in $\mathbb F_q(x)$, one has
\begin{equation}\label{f}(-1)^nxg_1(x)\cdots g_{n-1}(x)=-t=-f(x).\end{equation}
For $i=1,\dots,n-1$, it follows from the equation $f(g_i(x))-f(x)=0$ that $g_i(x)$ is integral over $\mathbb F_q[x]$. The fact that $\mathbb F_q[x]$ is integrally closed then implies $g_i(x)\in\mathbb F_q[x]$, and then $\deg(g_i)=[\mathbb F_q(x):\mathbb F_q(g_i(x))]=1$. Now $g_i(x)=a_ix+b_i$ for some $a_i,b_i\in\mathbb F_q$ with $a_i\ne0$. Note that the Galois group $G=\{\sigma_0,\dots,\sigma_{n-1}\}$ of $\mathbb F_q(x)/\mathbb F_q(t)$ is given by $\sigma_0(x)=x$ and $\sigma_i(x)=g_i(x)$, with $\sigma_i^n=\sigma_0$. By induction it is easily seen that
\[x=\sigma_i^n(x)=a_i^nx+(a_i^{n-1}+\dots+a_i+1)b_i,\] 
so $a_i^n=1$.

Suppose $\gcd(n,q-1)=1$, which means $a_i=1$ for each $i$. For $j=0,\dots,n-1$, we have $\sigma_j(g_i(x))=g_i(g_j(x))=x+b_i+b_j$ is also a root of $f(T)-t$. Since $G$ acts transitively on $\{x,g_1(x),\dots,g_{n-1}(x)\}$, for each $i$ we have $b_i+b_j=0$ for some $j$. It follows that the distinct $n$ elements, $0,b_1,\dots,b_{n-1}$, form an additive subgroup of $\mathbb F_q$. This happens only if $n$ divides $q$. From \eqref{f} it follows that
\[f(x)=x(x-b_1)\cdots(x-b_{n-1}).\]

Now let $\gcd(n,q-1)\ne1$. By the same argument, the $n$ (not necessarily distinct) elements $1,a_1,\dots,a_{n-1}$, where each occurs the same time, form a multiplicative subgroup of $\mathbb F_q$. While they are all $n$-th roots of unity, the subgroup is generated by $\omega$, a primitive $k$-th root of unity for some integer $k$ dividing $n$. If $n=k$, then $G$ is cyclic and the $n$ roots of $f(T)-t$ are given by $g_i(x)=\omega^ix+\frac{\omega^i-1}{\omega-1}b_1$, and according to \eqref{f},
\[f(x)=(-1)^{n-1}\prod_{i=0}^{n-1}\left(\omega^ix+\frac{\omega^i-1}{\omega-1}b_1\right),\]
and then
\[\begin{split}f\left(x-\frac{b_1}{\omega-1}\right)=&(-1)^{n-1}\prod_{i=0}^{n-1}\left(\omega^ix-\frac{b_1}{\omega-1}\right)\\=&(-1)^{n-1}\omega^{1+\dots+n-1}\prod_{i=0}^{n-1}\left(x-\frac{\omega^{-i}b_1}{\omega-1}\right)\\=&x^n-\left(\frac{b_1}{\omega-1}\right)^n.\end{split}\]

If $n>k$, then there are exactly $n/k$ roots of $f(T)-t$ in the form: $x,x+\beta_1,\dots,x+\beta_{n/k-1}$, with $\beta_1,\dots,\beta_{n/k-1}\in\mathbb F_q^*$. Apparently $0,\beta_1,\dots,\beta_{n/k-1}$ form an additive subgroup $B$ of order $n/k$, so $n=kp^l$ for some $l\in\mathbb N$. The corresponding automorphisms of $\mathbb F_q(x)/\mathbb F_q(t)$ also form a subgroup $H$ of $G$. More precisely, it is a normal subgroup. If $\sigma\in G\setminus H$, i.e., $\sigma(x)=\omega^ix+\beta$ for some $\beta\in\mathbb F_q$ and $\omega^i\ne1$, then $(\sigma^k)(x)=\omega^{ik}x+\frac{\omega^{ik
}-1}{\omega^i-1}\beta=x$. The order of $\sigma$ then divides $k$, while $\gcd(k,p^l)=1$, so there is no other conjugate of $H$. Since $H$ is normal, we have $\sigma^{-1}H\sigma=H$. Provided $\sigma^\prime\in H$ with $\sigma^\prime(x)=x+\beta_j$ ($0<j<p^l$), and $\sigma(x)=\omega^{-1}x+\beta$, it follows that $\sigma^{-1}(x)=\omega(x-\beta)$ and $(\sigma^{-1}\sigma^\prime\sigma)(x)=\omega(\omega^{-1}x+\beta_j)=x+\omega\beta_j$. Then $\omega B=B$, which happens if and only if $B$ is a vector space over $\mathbb F_p(\omega)$. Thus $p^l$ is a power of the order of $\mathbb F_p(\omega)$, and $p^l\equiv1\pmod k$. Subsequently we determine the specific form of $f$.

By Lemma \ref{decomposable} and the fundamental theorem of Galois theory, there exists an intermediate field $\mathbb F_q(u)$ of $\mathbb F_q(x)/\mathbb F_q(t)$ such that $t=g(u)\in\mathbb F_q[u]$ and $u=h(x)\in\mathbb F_q[x]$. Moreover, $H=\Gal(\mathbb F_q(x)/\mathbb F_q(u))$ and $\mathbb F_q(u)/\mathbb F_q(t)$ is a Galois extension of degree $k=\deg(g)$. Without loss of generality, let $g$ and $h$ be monic. As discussed before, $h(x)-h(0)=\sum_{b\in B}(x-b)$, and $g(u)-g(0)=(u+c)^k-c^k$ for some $c\in\mathbb F_q$. Hence
\[f(x)=\left(\sum_{b\in B}(x-b)+h(0)+c\right)^k-c^k+g(0).\]
It then suffices to set $g(0)=h(0)=0$ for convenience. If one root of $f(T)-t$ is $\omega x+\beta$ for some $\beta\in\mathbb F_q$, then
\[(h(x)+c)^k-c^k=(h(\omega x+\beta)+c)^k-c^k,\]
which means
\[(h(x)+c)^k=(h(\omega x)+h(\beta)+c)^k=(\omega h(x)+h(\beta)+c)^k.\]
It follows that
\[\omega^i(h(x)+c)=\omega h(x)+h(\beta)+c\]
for some integer $i$. Since $x$ is transcendental over $\mathbb F_q$, one has $\omega^i=\omega$ and $c=(\omega-1)^{-1}h(\beta)=h\left((\omega-1)^{-1}\beta\right)$.

With the above discussion, the converse is obvious.
\qed\end{proof}

Now we have characterized those polynomials with minimal Galois groups. It turns out that they coincide with those constructed in \cite{tamo2014}. The following result is already known, but it can be immediately derived as a consequence of the above proposition (cf. Proposition 3.2 and Theorem 3.3 in \cite{tamo2014}). The converse is also true when $q$ is sufficiently large with $n$ fixed.

\begin{corollary}
For the polynomial $f$ of degree $n$ over $\mathbb F_q$, $\mathcal G(f)=\left\lfloor\frac qn\right\rfloor$ if the condition in Proposition \ref{deg_1} is satisfied.
\end{corollary}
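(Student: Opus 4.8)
The plan is to bypass the $O(\sqrt q)$ error term of Lemma \ref{1} and instead compute $\mathcal G(f)$ exactly from the explicit shape of $f$ furnished by Proposition \ref{deg_1}. Writing $f(T)=(h(T)+c)^k-c^k$ as in the proposition, the equation $f(T)=\gamma$ is equivalent to $(h(T)+c)^k=\delta$ with $\delta:=c^k+\gamma$. Since $h(T)=\prod_{b\in B}(T-b)$ is the annihilator polynomial of the additive subgroup $B$, it is a monic additive ($\mathbb F_p$-linear) map on $\mathbb F_q$ of degree $p^l$; hence its image $h(\mathbb F_q)$ is an $\mathbb F_p$-subspace of cardinality $q/p^l$, and for each $\eta$ the equation $h(T)=\eta$ has either no solution or exactly $p^l$ solutions (a full coset of $B$), the latter precisely when $\eta\in h(\mathbb F_q)$. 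I would also record that $c=h(\alpha)\in h(\mathbb F_q)$.

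The crucial step is the identity $h(\omega T)=\omega h(T)$. I would first verify it by the computation $h(\omega T)=\omega^{p^l}\prod_{b\in B}(T-\omega^{-1}b)=\omega^{p^l}h(T)$, using $\omega^{-1}B=B$, and then $\omega^{p^l}=\omega$ because $p^l\equiv1\pmod k$ and $\omega^k=1$. This identity forces $h(\mathbb F_q)$ to be stable under multiplication by $\omega$, so that the $k$-th roots $w_0,\omega w_0,\dots,\omega^{k-1}w_0$ of a given nonzero $\delta$ are \emph{either all} in $h(\mathbb F_q)$ \emph{or none} is. I expect this all-or-nothing dichotomy to be the main obstacle: it is exactly what lets the count close to an exact integer, and it is where the hypothesis $\omega B=B$ (equivalently $p^l\equiv1\pmod k$) is indispensable, for otherwise the roots of $f(T)-\gamma$ would be split between $\mathbb F_q$ and its extensions and no clean formula would survive.

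Assembling the pieces, I would show that $f(T)-\gamma$ has $n=kp^l$ distinct roots in $\mathbb F_q$ if and only if $\delta=c^k+\gamma$ is a nonzero $k$-th power whose $k$-th roots lie in $h(\mathbb F_q)$: each admissible $w=\omega^jw_0$ gives the coset $\{T:h(T)=w-c\}$ of size $p^l$ (note $w-c\in h(\mathbb F_q)$ since $c\in h(\mathbb F_q)$), and the $k$ cosets are pairwise disjoint, producing $kp^l=n$ distinct roots. Because $q\equiv1\pmod k$, the $k$-th power map is $k$-to-one on $\mathbb F_q^*$, and as $h(\mathbb F_q)\setminus\{0\}$ is a union of $\langle\omega\rangle$-orbits it maps $k$-to-one onto the admissible values of $\delta$; hence the number of admissible $\gamma$ equals $(|h(\mathbb F_q)|-1)/k=(q-p^l)/n$. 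Finally I would check the arithmetic that $(q-p^l)/n=\lfloor q/n\rfloor$: from $p^l\mid q$ and $q/p^l\equiv1\pmod k$ one gets $q=nm+p^l$ with $m=(q-p^l)/n\in\mathbb N$ and $0<p^l<n$ (as $k\ge2$), so indeed $\lfloor q/n\rfloor=m$. The degenerate case $k=1$ is the purely additive subgroup construction, where $f=h$ and directly $\mathcal G(f)=|h(\mathbb F_q)|=q/n=\lfloor q/n\rfloor$. As a consistency check, since $M=\mathbb F_q(x)$ is a rational function field and thus of genus $0$, the two-sided estimate of Lemma \ref{1} collapses to $\frac{q+1}{n}-\frac R2\le\mathcal G(f)\le\frac{q+1}{n}$, in agreement with the exact value obtained above.
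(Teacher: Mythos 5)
Your proof is correct, but it does not follow the paper's route, because the paper in fact gives no self-contained argument for this corollary: it merely observes that the polynomials characterized in Proposition \ref{deg_1} coincide with the Tamo--Barg constructions and defers to Proposition 3.2 and Theorem 3.3 of \cite{tamo2014} for the value of $\mathcal G(f)$. What you do instead is reconstruct that count from scratch: the fiber analysis of the linearized polynomial $h$ (kernel $B$, image an $\mathbb F_p$-subspace of size $q/p^l$), the identity $h(\omega T)=\omega h(T)$ forced by $\omega B=B$ together with $p^l\equiv1\pmod k$, the resulting all-or-nothing dichotomy for the $k$-th roots of $\delta=c^k+\gamma$, and the exact count $(q-p^l)/n=\lfloor q/n\rfloor$. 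This buys the paper a self-contained proof, makes visible exactly where each hypothesis of Proposition \ref{deg_1} enters (including $c\in h(\mathbb F_q)$, which you use to pass from $w$ to $w-c$ inside the image), and rightly bypasses Lemma \ref{1}, which even with genus $0$ cannot pin down an exact value because of the unknown ramification term $R/2$; your closing consistency check with Lemma \ref{1} is a nice sanity test but carries no logical weight. Two points to tighten: (i) you silently read the paper's $h(T)=\sum_{b\in B}(T-b)$ as the product $\prod_{b\in B}(T-b)$ --- this is clearly the intended annihilator polynomial (a literal sum would have degree at most one), but the correction should be stated; (ii) the additivity of $\prod_{b\in B}(T-b)$ for an additive subgroup $B$ is a standard fact about subspace (linearized) polynomials and deserves a one-line citation or proof, since both the coset structure of the fibers and the equivalence $w-c\in h(\mathbb F_q)\iff w\in h(\mathbb F_q)$ rest on it. Also, your parenthetical claim that $\omega B=B$ is ``equivalent'' to $p^l\equiv1\pmod k$ is imprecise (the former implies the latter, not conversely), but this is harmless since both are hypotheses of Proposition \ref{deg_1} and you only use them as such.
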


\begin{example}
Let $q=64$ and $n=12=3\times2^2$. Note that $64\equiv2^2\equiv1\pmod3$, and $\omega^3=1$ implies $\omega^4-\omega=0$. Set $f(T)=(T^4-T)^3$, so that $M=\mathbb F_q(x)$ and $\mathcal G(f)=5$.
\end{example}

Observe that all those polynomials in Proposition \ref{deg_1} split completely over $\mathbb F_q$. In fact, we can prove more.

\begin{theorem}\label{factorization}
For some $c\in\mathbb F_q$, if $f(T)-c$ has an irreducible factor of multiplicity 1 in $\mathbb F_q[T]$, then the multiplicity of every irreducible factor of $f(T)-c$ divides $[M:\mathbb F_q(x)]$; if $f(T)-c$ has a root in $\mathbb F_q$, then the degree of every irreducible factor of $f(T)-c$ divides $[M:\mathbb F_q(x)]$.
\end{theorem}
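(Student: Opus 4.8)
The plan is to read off the factorization of $f(T)-c$ as ramification data in the tower $\mathbb F_q(t)\subseteq\mathbb F_q(x)\subseteq M$ and then play the two Galois layers against each other. Writing $f(T)-c=\prod_{i=1}^r\varphi_i(T)^{e_i}$ with the $\varphi_i$ distinct and irreducible over $\mathbb F_q$, the description of places recalled in Section~\ref{pre} (with $f_1=1$) says that the rational place $P=(t-c)$ of $\mathbb F_q(t)$ has below it in $\mathbb F_q(x)$ exactly the places $P_i=(\varphi_i(x))$, with ramification index $e(P_i\mid P)=e_i$ and relative degree $f(P_i\mid P)=\deg(\varphi_i)$. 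Thus Part~1 is the assertion that every multiplicity $e_i$ divides $[M:\mathbb F_q(x)]$, and Part~2 that every degree $\deg(\varphi_i)$ divides $[M:\mathbb F_q(x)]$.

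The structural fact I would build on is that $M/\mathbb F_q(x)$ is itself Galois: since $M$ is by definition the Galois closure of $\mathbb F_q(x)/\mathbb F_q(t)$, the extension $M/\mathbb F_q(t)$ is Galois, and then $M/E$ is Galois for every intermediate field $E$, in particular for $E=\mathbb F_q(x)$. Hence \eqref{equality_Galois} is available for \emph{both} layers. Fixing a place $Q$ of $M$ above $P$, the numbers $e(Q\mid P)$ and $f(Q\mid P)$ are independent of the choice of $Q$ (by the Galois property of $M/\mathbb F_q(t)$); call them $e$ and $\hat f$. Likewise, for a place $Q$ above a fixed $P_i$, applying \eqref{equality_Galois} to the Galois extension $M/\mathbb F_q(x)$ shows that both $e(Q\mid P_i)$ and $f(Q\mid P_i)$ divide $[M:\mathbb F_q(x)]$. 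Multiplicativity of ramification indices and relative degrees along the tower then gives, for every $i$, the relations $e=e(Q\mid P_i)\,e_i$ and $\hat f=f(Q\mid P_i)\,\deg(\varphi_i)$; in particular $e_i\mid e$ and $\deg(\varphi_i)\mid\hat f$ uniformly in $i$.

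It remains to control the global invariants $e$ and $\hat f$, and this is precisely where each hypothesis is used. For Part~1, a multiplicity-one factor $\varphi_{i_0}$ gives $e_{i_0}=1$, so that $e=e(Q\mid P_{i_0})$ divides $[M:\mathbb F_q(x)]$; combined with $e_i\mid e$ this forces $e_i\mid[M:\mathbb F_q(x)]$ for all $i$. For Part~2, a root of $f(T)-c$ in $\mathbb F_q$ is a linear factor $\varphi_{i_0}$ with $\deg(\varphi_{i_0})=1$, so that $\hat f=f(Q\mid P_{i_0})$ divides $[M:\mathbb F_q(x)]$; combined with $\deg(\varphi_i)\mid\hat f$ this forces $\deg(\varphi_i)\mid[M:\mathbb F_q(x)]$ for all $i$.

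I do not anticipate a genuinely hard step: the body of the argument is bookkeeping with the multiplicativity of $e$ and $f$ across a tower together with the fundamental identity for Galois extensions. The conceptual crux is the observation that $M/\mathbb F_q(x)$ is Galois, which is exactly what lets the local invariants above each $P_i$ divide $[M:\mathbb F_q(x)]$; without it one only controls the invariants of $M/\mathbb F_q(t)$ and cannot descend the divisibility to the $x$-layer. The only point demanding a little care is the selection of the ``good'' factor $\varphi_{i_0}$ (of multiplicity one in Part~1, of degree one in Part~2) to pin down $e$, respectively $\hat f$, as a genuine local invariant of $M/\mathbb F_q(x)$ before transferring the divisibility to the remaining factors.
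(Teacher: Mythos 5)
Your proof is correct and follows essentially the same route as the paper's: both identify the factorization of $f(T)-c$ with the splitting data of $(t-c)$ in $\mathbb F_q(x)$, use the Galois property of $M/\mathbb F_q(t)$ to make the ramification index (resp.\ relative degree) over $(t-c)$ constant across places of $M$, apply \eqref{equality_Galois} to the Galois layer $M/\mathbb F_q(x)$ to get divisibility by $[M:\mathbb F_q(x)]$, and use the multiplicity-one (resp.\ degree-one) factor to pin down that invariant via multiplicativity in the tower. The only difference is presentational: you treat both parts uniformly with invariants $e$ and $\hat f$, whereas the paper compares two specific places and then says the degree case is analogous.
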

\begin{proof}
Let $\varphi_1(T)$ be an irreducible factor of multiplicity 1 of $f(T)-c$ in $\mathbb F_q[T]$ and assume that there is another irreducible factor $\varphi_2(T)$ of $f(T)-c$. Then $(\varphi_1(x))$ and $(\varphi_2(x))$ are places of $\mathbb F_q(x)/\mathbb F_q$ lying above the rational place $(t-c)$ of $\mathbb F_q(t)/\mathbb F_q$, with $e((\varphi_1(x))\mid(t-c))=1$. Let $P_1$ and $P_2$ be places of $M$ lying above $(\varphi_1(x))$ and $(\varphi_2(x))$ respectively. Since $M/\mathbb F_q(x)$ and $M/\mathbb F_q(t)$ are Galois, it follows from \eqref{equality_Galois} that $e(P_1\mid(\varphi_1(x)))$ divides $[M:\mathbb F_q(x)]$, and
\[\begin{split}&e(P_2\mid(\varphi_2(x)))e((\varphi_2(x))\mid(t-c))\\=&e(P_2\mid(t-c))\\=&e(P_1\mid(t-c))\\=&e(P_1\mid(\varphi_1(x)))e((\varphi_1(x))\mid(t-c))\\=&e(P_1\mid(\varphi_1(x))).\end{split}\]
Hence, the multiplicity of $\varphi_2(T)$, equal to $e(\varphi_2(x)\mid(t-c))$, divides $[M:\mathbb F_q(x)]$. Consider the relative degree, and then the second assertion follows immediately.\qed
\end{proof}

This theorem gives an easy way to determine a lower bound of $[M:\mathbb F_q(x)]$ for some polynomial $f$. Note that $f(0)=0$, and we may suppose that $f(T)/T$ has irreducible factors of degree $d_1,\dots,d_k$. Then $[M:\mathbb F_q(x)]$ is a multiple of $\lcm(d_1,\dots,d_k)$. For instance, if $q=19$ and $f(T)=T(T^2+1)(T^3+2T+1)$, where $T^2+1$ and $T^3+2T+1$ are irreducible in $\mathbb F_{19}[T]$, then $[M:\mathbb F_q(x)]$ is divisible by 6. On the other hand, we also learn that it is more likely to obtain a good polynomial for LRC codes, if choosing a polynomial splitting completely over $\mathbb F_q$.

It has been shown that only a few polynomials satisfy the condition for $\mathbb F_q(x)/\mathbb F_q(t)$ being Galois. Now we investigate those for which $[M:\mathbb F_q(x)]>1$. If this is the case, then there must exist another root $y$ of $f(T)-t$ lying outside $\mathbb F_q(x)$, such that $\mathbb F_q(x)$ is isomorphic to $\mathbb F_q(y)$. We will start with the minimal polynomial of $y$ over $\mathbb F_q(x)$.

\begin{lemma}\label{polynomial}
Let $y\in M\setminus\mathbb F_q(x)$ be a root of $f(T)-t$ with minimal polynomial $T^m+a_{m-1}(x)T^{m-1}+\dots+a_0(x)$ over $\mathbb F_q(x)$. Then $a_i(x)\in\mathbb F_q[x]$, $\deg(a_i)\le m-i$ for $0\le i<m$ and $\deg(a_0)=m$.
\end{lemma}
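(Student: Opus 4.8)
The plan is to work entirely with the integral structure of $\mathbb{F}_q[x]$ and the pole divisor of $t$. Since $y$ is a root of $f(T)-t$, it satisfies $f(y)=t=f(x)$, so $y$ is integral over $\mathbb{F}_q[x]$ (it is a root of the monic polynomial $f(T)-f(x)\in\mathbb{F}_q[x][T]$). Because $\mathbb{F}_q[x]$ is integrally closed in its fraction field $\mathbb{F}_q(x)$, and the coefficients $a_i(x)$ of the minimal polynomial of $y$ lie in $\mathbb{F}_q(x)$ and are (up to sign) the elementary symmetric functions of the $\mathbb{F}_q(x)$-conjugates of $y$ — each of which is again a root of $f(T)-f(x)$ and hence integral over $\mathbb{F}_q[x]$ — each $a_i(x)$ is integral over $\mathbb{F}_q[x]$, whence $a_i(x)\in\mathbb{F}_q[x]$. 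This is the first claim.

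\textbf{Degree bounds via the infinite place.} For the degree inequalities, I would analyze valuations at the infinite place. In $\mathbb{F}_q(x)/\mathbb{F}_q$ the pole of $x$, call it $\infty_x$, satisfies $v_{\infty_x}(x)=-1$, and since $t=f(x)$ has degree $n$ we get $v_{\infty_x}(t)=-n$. Let $Q$ be a place of $M$ lying above $\infty_x$, with ramification index $e=e(Q\mid\infty_x)$. Both $x$ and $y$ are roots of $f(T)-t$, so each has the same pole behavior relative to $t$: applying $v_Q$ to the defining relation and using that $f$ is monic of degree $n$, one finds $v_Q(x)=v_Q(y)=\tfrac{1}{n}v_Q(t)=-e$ (here I use $v_Q(t)=e\cdot v_{\infty_x}(t)=-en$, and that the leading term of $f$ dominates by the triangle-inequality equality case). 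Thus $y$ has a pole of the same order $e$ as $x$ at $Q$. Now read off the valuation of the minimal-polynomial relation $y^m+a_{m-1}(x)y^{m-1}+\dots+a_0(x)=0$: the term $y^m$ has $v_Q$-value $-em$, and each $a_i(x)y^i$ has value $e\deg(a_i)\cdot(-1)\cdot(\text{sign?})+\dots$; more carefully $v_Q(a_i(x)y^i)=v_Q(a_i(x))+i\,v_Q(y)=-e\deg(a_i)-ei$. For the sum to vanish the minimum valuation must be attained at least twice, and $y^m$ contributes $-em$; comparing with $a_i(x)y^i$ forces $-e\deg(a_i)-ei\ge -em$, i.e.\ $\deg(a_i)\le m-i$, giving the stated bound for all $0\le i<m$.

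\textbf{The equality $\deg(a_0)=m$.} The constant term $a_0(x)=(-1)^m\prod_j y_j$ is $(-1)^m$ times the norm of $y$ from $\mathbb{F}_q(x)(y)$ down to $\mathbb{F}_q(x)$; equivalently it is the product of the $m$ conjugates of $y$. Each conjugate is a root of $f(T)-t$ distinct from the roots generating the complementary factors, and each has pole order exactly $e$ above $\infty_x$ with $v_Q(y_j)=-e$. Hence $v_Q(a_0(x))=\sum_j v_Q(y_j)=-em$, so $v_{\infty_x}(a_0(x))=-m$, which says precisely $\deg(a_0)=m$. The complementary inequality $\deg(a_0)\le m$ is already the $i=0$ case of the previous bound, so equality holds.

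\textbf{Anticipated obstacle.} The main subtlety is justifying $v_Q(y)=v_Q(x)$ — that is, that \emph{every} root of $f(T)-t$ has the same pole order at each place of $M$ above $\infty_x$. This is where one must argue that in the Galois closure $M$ the infinite place's behavior is symmetric across all roots: since $M/\mathbb{F}_q(t)$ is Galois and the roots $x,y,\dots$ form a single orbit under $\Gal(M/\mathbb{F}_q(t))$, any two roots are carried to one another by an automorphism fixing $t$, hence have identical pole orders at the (conjugate) places over the totally-symmetric place $\infty_t$ of $\mathbb{F}_q(t)$. Pinning down this Galois-orbit argument cleanly, and handling the edge case where $\infty_x$ could split or ramify in $M$, is the part requiring the most care; everything else reduces to the triangle-inequality equality case and integral closedness.
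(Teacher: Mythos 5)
Your integrality argument is correct (and in fact cleaner than the paper's, which reaches $a_i(x)\in\mathbb F_q[x]$ through a Gauss's-lemma detour with the reversed polynomial), and your computation $v_Q(x)=v_Q(y)=-e$ at a place $Q$ above the infinite place is exactly the paper's triangle-inequality argument, as is your norm computation for $a_0$. The genuine gap is in the step that bounds the intermediate coefficients: from the relation $y^m+a_{m-1}(x)y^{m-1}+\dots+a_0(x)=0$ you argue that ``the minimum valuation must be attained at least twice'' and then conclude that every term satisfies $v_Q(a_i(x)y^i)\ge v_Q(y^m)=-em$. That inference is invalid. The minimum being attained at least twice does not force $y^m$ to attain it: the minimum could be attained by two terms $a_i(x)y^i$ and $a_j(x)y^j$ with $i,j<m$ at a common value strictly smaller than $-em$, their leading parts cancelling each other, and nothing in the vanishing of the sum at the single place $Q$ rules this out. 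So, as written, $\deg(a_i)\le m-i$ is not established for $0<i<m$.

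The fix is already contained in your own proof and is precisely what the paper does. You observe (in the integrality step) that the $a_i(x)$ are, up to sign, the elementary symmetric functions of the conjugates $y_1,\dots,y_m$ of $y$ over $\mathbb F_q(x)$, and each conjugate, being again a root of $f(T)-t$, satisfies $v_Q(y_j)=-e$ by the same triangle-inequality argument you used for $y$. Since $a_i(x)$ is a sum of products of $m-i$ of the conjugates, each such product having valuation exactly $-(m-i)e$, one gets
\[
v_Q(a_i(x))\ \ge\ -(m-i)e,\qquad\text{i.e.}\qquad \deg(a_i)\le m-i,
\]
with equality of valuations for $i=0$ because $a_0(x)$ is a single product rather than a sum, giving $\deg(a_0)=m$. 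This replaces the faulty comparison with $y^m$ and is how the paper argues. A minor further remark: the Galois-orbit argument you sketch as an ``anticipated obstacle'' is unnecessary and does not work as stated, since an automorphism carrying $x$ to $y$ also moves the place $Q$, so it only equates pole orders at conjugate places rather than at the same place; your direct valuation argument already settles $v_Q(y_j)=-e$ for every root, and no appeal to the Galois action is needed.
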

\begin{proof}
Note that $a_i(x)$ can be written as $a_i(x)=u_i(x)/v_i(x)$ for relatively prime polynomials $u_i$ and $v_i$ over $\mathbb F_q$. Then $y^m+a_{m-1}(x)y^{m-1}+\dots+a_0(x)=0$ is equivalent to
\[v(x)y^m+\frac{u_{m-1}(x)v(x)}{v_{m-1}(x)}y^{m-1}+\dots+\frac{u_0(x)v(x)}{v_0(x)}=0,\]
where $v(x)=v_0(x)\cdots v_{m-1}(x)$. Let $d=\gcd(v,\frac{u_{m-1}v}{v_{m-1}},\dots,\frac{u_0v}{v_0})$, so that
\[F(T)=\frac1{d(T)}\left(v(T)y^m+\frac{u_{m-1}(T)v(T)}{v_{m-1}(T)}y^{m-1}+\dots+\frac{u_0(T)v(T)}{v_0(T)}\right)\]
is a polynomial in $\mathbb F_q[y][T]$. Assume that $F(T)$ is reducible over $\mathbb F_q(y)$. By Gauss's Lemma, it is also reducible over $\mathbb F_q[y]$; that is, $F(T)=F_1(T)F_2(T)$ for some $F_1(T),F_2(T)\in\mathbb F_q[y][T]\setminus\mathbb F_q[y]$, with $F_1(x)=0$ or $F_2(x)=0$. If $F_1(T),F_2(T)\notin\mathbb F_q[T]$, then $y$ is a root of a polynomial of lower degree over $\mathbb F_q(x)$. Hence, we may suppose $F_1(T)\in\mathbb F_q[T]$. According to the definition of $d(T)$, this happens only if $F_1(T)\in\mathbb F_q$. The contradiction shows that $F(T)$ is irreducible over $\mathbb F_q(y)$, with a root $x$. The degree of $F(T)$ is
\[[\mathbb F_q(x,y):\mathbb F_q(y)]=\frac{[\mathbb F_q(x,y):\mathbb F_q(t)]}{[\mathbb F_q(x):\mathbb F_q(t)]}=\frac{[\mathbb F_q(x,y):\mathbb F_q(t)]}{[\mathbb F_q(y):\mathbb F_q(t)]}=[\mathbb F_q(x,y):\mathbb F_q(x)],\]
so $\deg(u_0)-\deg(v_0)+\deg(v)-\deg(d)\le m$.

Let $P$ be a place of $\mathbb F_q(x,y)$ lying above the infinite place $P_\infty$ of $\mathbb F_q(x)$ with $v_P$ the corresponding discrete valuation of $P$, and let $e$ be the ramification index of $P$ over $P_\infty$. Note that $f(x)=f(y)$, i.e.,
\[x^n+\alpha_{n-1}x^{n-1}+\dots+\alpha_0=y^n+\alpha_{n-1}y^{n-1}+\dots+\alpha_0,\]
for some $\alpha_0,\dots,\alpha_{n-1}\in\mathbb F_q$. If $v_P(y)\ge0$, then $v_P(f(y))\ge0$, but $v_P(f(x))=-n\cdot e<0$. As a result, $v_P(f(x))=v_P(f(y))=n\cdot v_P(y)$, by the triangle inequality, and then $v_P(y)=-e$. The same argument applies to the conjugates $y_1,\dots,y_m$ of $y$ with respect to $\mathbb F_q(x)$. It follows that
\[v_P(a_0(x))=v_P(y_1)+\dots+v_P(y_m)=-me,\]
and
\[v_P(a_i(x))\ge v_P(y_1)+\cdots+v_P(y_{m-i})=-(m-i)e,\]
for $i=1,\dots,m-1$, using the formula of the elementary symmetric polynomials. This indicates $\deg(u_0)-\deg(v_0)=m$ and $\deg(u_i)-\deg(v_i)\le m-i$. Recalling that $\deg(u_0)-\deg(v_0)+\deg(v)-\deg(d)\le m$ and $d$ divides $v$, we have $\deg(v)=\deg(d)$. Then $v$ divides $\frac{u_{0}v}{v_0},\dots,\frac{u_{m-1}v}{v_{m-1}}$, and clearly $a_0(x),\dots,a_{m-1}(x)\in\mathbb F_q[x]$. This completes the proof.\qed
\end{proof}

\begin{remark}
The inequality $0\le\deg(a_i)\le m-i$ for $0\le i<m$ is sharp in general. For example, let $f(T)=(T^3+1)^2$ over $\mathbb F_q$, where $\gcd(2,q)=\gcd(3,q)=1$. Then there are elements $x$ and $y$ such that $f(x)=f(y)=t$ and $x^3+y^3+2=0$. The minimal polynomial of $y$ over $\mathbb F_q(x)$ is $T^3+x^3+2$. On the other hand, let $f(T)=T^3+T$ over $\mathbb F_q$, where $q\equiv1\pmod3$. It is easy to verify that if $f(y)=f(x)=t$ and $y\ne x$, then the minimal polynomial of $y$ over $\mathbb F_q(x)$ is $T^2+xT+x^2+1$.
\end{remark}

Let $\tau:\mathbb F_q(t)\rightarrow\mathbb F_q(u)$ be an isomorphism fixing $\mathbb F_q$ with $\tau(t)=u$ for some $u$ transcendental over $\mathbb F_q$, and $M^\prime$ be the splitting field of $f(T)-u$. Then $\tau$ can be extended to an isomorphism from $M$ to $M^\prime$. If $f(x)=f(y)=t$, then $f(\tau(x))=f(\tau(y))=u$, and the minimal polynomial of $\tau(y)$ over $\mathbb F_q(\tau(x))$ is obtained by applying $\tau$ to each coefficients of that of $y$ over $\mathbb F_q(x)$. Thus it suffices to study the function field defined by $f(x)=t^n$. To this end, we introduce the field of formal Laurent series over $\overline{\mathbb F_q}$.

\begin{lemma}[{\cite[Theorem 6.12]{lidl1993}}]\label{series}
Let $\overline{\mathbb F_q}((t))$ be the field of formal Laurent series $\sum_{j\ge j_0}c_j/t^j$ with $c_j\in\overline{\mathbb F_q}$. If $\gcd(n,q)=1$, then there exists $\delta(t)=t+\sum_{j\ge0}c_j/t^j\in\overline{\mathbb F_q}((t))$ such that $f(\delta(t))=f(\delta(\omega t))=t^n$, where $\omega\in\overline{\mathbb F_q}$ is a primitive $n$-th root of unity.
\end{lemma}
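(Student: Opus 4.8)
The plan is to construct $\delta$ by solving $f(\delta)=t^n$ inside the field $\overline{\mathbb F_q}((t))$ of the statement, i.e. formal Laurent series in $1/t$, equipped with the valuation $v$ normalized by $v(1/t)=1$ (the place at infinity), whose valuation ring is $\overline{\mathbb F_q}[[1/t]]$; the second identity will then come for free by a change of variable. Since $f$ is monic of degree $n$ with $f(0)=0$, write $f(T)=T^n+\alpha_{n-1}T^{n-1}+\dots+\alpha_1T$. The right normalization is to look for $\delta=tY$ with $Y$ a unit congruent to $1$ modulo $1/t$; dividing $f(tY)=t^n$ through by $t^n$ turns the problem into finding $Y\in\overline{\mathbb F_q}[[1/t]]$ with
\[g(Y):=Y^n+\frac{\alpha_{n-1}}{t}Y^{n-1}+\dots+\frac{\alpha_1}{t^{n-1}}Y-1=0.\]

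First I would set up a Hensel lifting. The ring $R=\overline{\mathbb F_q}[[1/t]]$ is complete local with maximal ideal $(1/t)$ and residue field $\overline{\mathbb F_q}$, and the monic polynomial $g(Y)\in R[Y]$ reduces modulo $(1/t)$ to $\overline g(Y)=Y^n-1$. This is exactly where the hypothesis $\gcd(n,q)=1$ enters: in characteristic $p$ with $p\nmid n$ the polynomial $Y^n-1$ is separable, so $Y=1$ is a simple root, i.e. $\overline g(1)=0$ while $\overline g{}'(1)=n\neq0$. Hensel's lemma lifts this to a (unique) root $Y\in R$ with $Y\equiv1\pmod{1/t}$, and setting $\delta=tY$ gives $f(\delta)=t^n$ together with the required shape $\delta=t+\sum_{j\ge0}c_j/t^j$, the leading coefficient being forced to $1$ by $Y\equiv1$. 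Equivalently, one may build the $c_j$ directly: matching the coefficient of $t^{n-1-k}$ on the two sides of $f(\delta)=t^n$ determines $c_k$ as $-\tfrac1n$ times a polynomial in $c_0,\dots,c_{k-1}$ and the $\alpha_i$, which again uses precisely the invertibility of $n$.

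The second identity is then immediate. The relation $f(\delta(s))=s^n$ holds as a formal identity in the variable $s$, so substituting $s=\omega t$ gives $f(\delta(\omega t))=(\omega t)^n=\omega^n t^n=t^n$ because $\omega^n=1$. This substitution is legitimate since $\omega\in\overline{\mathbb F_q}$ maps $\overline{\mathbb F_q}((t))$ to itself and $\delta(\omega t)=\omega t+\sum_{j\ge0}c_j\omega^{-j}/t^j$ is again an admissible Laurent series.

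I expect the only genuine obstacle to be the existence of $\delta$, and within it the verification that $\gcd(n,q)=1$ is exactly the condition making $Y=1$ a simple root of the reduction (equivalently, making the repeated division by $n$ legal), so that the lifting is well defined at every step; once $\delta$ exists, the passage to $f(\delta(\omega t))=t^n$ is the one-line change of variable above. I would also make a point of confirming that $\delta$ has pole order exactly one with leading coefficient $1$, so that it genuinely has the stated expansion $t+\sum_{j\ge0}c_j/t^j$ rather than some other nonzero leading term.
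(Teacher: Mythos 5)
Your proposal is correct, but note that the paper does not actually prove this statement: it is quoted as a known result, cited from the reference \cite{lidl1993} (Theorem 6.12 there), so there is no in-paper proof to compare against. Your argument fills that gap soundly. The reduction to the monic equation $g(Y)=Y^n+\frac{\alpha_{n-1}}{t}Y^{n-1}+\dots+\frac{\alpha_1}{t^{n-1}}Y-1=0$ over the complete local ring $R=\overline{\mathbb F_q}[[1/t]]$ is the right normalization, and the hypothesis $\gcd(n,q)=1$ enters exactly where you say: it makes $Y=1$ a simple root of the residual polynomial $Y^n-1$, so Hensel's lemma applies and forces the expansion $\delta=t+\sum_{j\ge0}c_j/t^j$ with leading coefficient $1$. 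Your second step is also handled correctly: the map $t\mapsto\omega t$, i.e.\ $\sum c_j/t^j\mapsto\sum c_j\omega^{-j}/t^j$, is a valuation-preserving field automorphism of $\overline{\mathbb F_q}((t))$ fixing $\overline{\mathbb F_q}$, hence commutes with applying $f$, and sends $t^n$ to $\omega^nt^n=t^n$; this gives $f(\delta(\omega t))=t^n$ with no further work. It is worth observing that your alternative ``direct'' construction --- matching the coefficient of $t^{n-1-k}$ to solve for $c_k$ after dividing by $n$ --- is precisely the computation the paper itself performs later (in Section 4.1, where it derives $c_0=0$ and $nc_1+\gamma=0$ under the normalization that the degree-$(n-1)$ coefficient of $f$ vanishes), so your proof is fully consistent with how the lemma is subsequently used.
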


\begin{theorem}\label{extension}
If $[M:\mathbb F_q(x)]=m$ and $\gcd(n,q)=1$, then $q^m\equiv1\pmod n$.
\end{theorem}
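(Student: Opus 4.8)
The plan is to analyze the extension $M/\mathbb F_q(t)$ locally at the place at infinity, where Lemma~\ref{series} describes the roots of $f(T)-t$ explicitly, and to read off a primitive $n$-th root of unity from the residue field there.

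First I would fix the infinite place $\infty$ of $\mathbb F_q(t)$ (the pole of $t$). Since $f$ is monic of degree $n$, the pole $\infty_x$ of $x$ is the unique place of $\mathbb F_q(x)$ above $\infty$, with $e(\infty_x\mid\infty)=n$ and residue field $\mathbb F_q$, so the completion of $\mathbb F_q(x)$ there is $\mathbb F_q((1/x))$. Because $\gcd(n,q)=1$, the expansion $t^{1/n}=x(1+\alpha_{n-1}/x+\cdots)^{1/n}$ makes sense as a Laurent series in $1/x$, so $\mathbb F_q((1/x))=\mathbb F_q((t^{-1/n}))$. Substituting $t\mapsto t^{1/n}$ in Lemma~\ref{series} (this is the passage to $f(x)=t^n$ mentioned above) then exhibits the $n$ roots of $f(T)-t$ as the series $x_i=\delta(\omega^i t^{1/n})=\omega^i t^{1/n}+\sum_{j\ge0}c_j\omega^{-ij}t^{-j/n}$ for $i=0,\dots,n-1$, all lying in $\overline{\mathbb F_q}((t^{-1/n}))$.

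Next I would pick a place $\mathfrak P$ of $M$ above $\infty$ with completion $M_{\mathfrak P}$, ramification index $e$ and residue field $\mathbb F_{q^f}$; concretely I take the embedding $M\hookrightarrow\overline{\mathbb F_q}((t^{-1/n}))$ sending the abstract roots to the series above. Writing $\pi=t^{1/n}$, one has $\pi\in\mathbb F_q((1/x))\subseteq M_{\mathfrak P}$, hence $e\ge n$; and since $M_{\mathfrak P}$ embeds into $\overline{\mathbb F_q}((t^{-1/n}))$, whose value group is $\tfrac1n\mathbb Z$, also $e\mid n$, so $e=n$. The decisive step is extracting the root of unity: the unit $x_1\pi^{-1}=\omega+c_0\pi^{-1}+\cdots\in M_{\mathfrak P}$ has residue $\omega$, so the primitive $n$-th root of unity $\omega$ lies in $\mathbb F_{q^f}$, forcing $q^f\equiv1\pmod n$.

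Finally I would convert the local degree into a divisibility of $m$. As $M/\mathbb F_q(t)$ is Galois, \eqref{equality_Galois} at $\infty$ gives $r\cdot e\cdot f=[M:\mathbb F_q(t)]=nm$ for the number $r$ of places above $\infty$; with $e=n$ this yields $rf=m$, so $f\mid m$. Combining with $q^f\equiv1\pmod n$ gives $q^m=(q^f)^{m/f}\equiv1\pmod n$, as claimed. I expect the main obstacle to be the local analysis justifying $e=n$ while $\mathfrak P$ is chosen so that the abstract roots genuinely specialize to the series $x_i$: that is, showing that the completion at infinity is the tamely totally ramified extension $\mathbb F_{q^f}((\pi^{-1}))$ of $\mathbb F_q((1/t))$ obtained by adjoining $\pi$ and the constants $\omega,c_j$, so that all ramification is carried by $\pi$ while the entire constant contribution, which must contain $\omega$, is recorded by the residue degree $f$.
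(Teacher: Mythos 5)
Your proof is correct, and it takes a genuinely different route from the paper's. Both arguments rest on Lemma \ref{series}, which supplies the Laurent-series roots $\delta(\omega^it)$ of $f(T)-t^n$; the difference lies in how the rationality of $\omega$ is extracted. The paper stays global: it applies Lemma \ref{polynomial} to the minimal polynomial of $\delta(\omega t)$ over $\mathbb F_q(\delta(t))$, whose coefficients are polynomials $a_i$ with $\deg(a_i)\le m-i$, and compares coefficients of $t^m$ to conclude that $\omega$ satisfies a monic equation $\omega^m+r_{m-1}\omega^{m-1}+\dots+r_0=0$ with $r_i\in\mathbb F_q$. You instead work locally at the infinite place: tameness ($\gcd(n,q)=1$) gives $\mathbb F_q((1/x))=\mathbb F_q((t^{-1/n}))$, the series of Lemma \ref{series} realize all roots of $f(T)-t$ inside $\overline{\mathbb F_q}((t^{-1/n}))$, the primitive root $\omega$ appears as the residue of the unit $x_1\pi^{-1}$, and \eqref{equality_Galois} with $e=n$ turns $[M:\mathbb F_q(t)]=nm$ into $r\cdot f_\infty=m$ for the relative degree $f_\infty$ of the place above infinity. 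This buys something concrete: you get $q^{f_\infty}\equiv1\pmod n$ \emph{together with} the divisibility $f_\infty\mid m$, so $q^m\equiv1\pmod n$ follows at once. By contrast, the paper's leading-coefficient comparison literally shows only that $\omega$ is a root of some monic degree-$m$ polynomial over $\mathbb F_q$, i.e. $[\mathbb F_q(\omega):\mathbb F_q]\le m$; the final inference ``$\omega\in\mathbb F_{q^m}$'' tacitly requires $[\mathbb F_q(\omega):\mathbb F_q]$ to \emph{divide} $m$, which is precisely what your fundamental-identity step supplies, so your route also closes a step the paper elides. In exchange, the paper's argument is lighter on machinery (no completions, no Hensel-type identification of the local field at infinity) and reuses Lemma \ref{polynomial}, which it needs anyway for the later Dickson-polynomial theorem; the local identification you flag as the main obstacle is indeed the one delicate point of your argument, but your justification of it (tame total ramification carried by $\pi$, with the constant contribution recorded by the residue degree) is the standard and correct one.
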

\begin{proof}
With the notation in the above lemma, it suffices to consider the splitting field of $f(T)-t^n$ over $\mathbb F_q(t^n)$. Now that $\delta(\omega t)$ and $\delta(t)$ are different roots of $f(T)-t^n$, it follows from Lemma \ref{polynomial} that
\[\delta(\omega t)^m+a_{m-1}(\delta(t))\delta(\omega t)^{m-1}+\dots+a_0(\delta(t))=0,\]
for some polynomials $a_0,\dots,a_{m-1}$ over $\mathbb F_q$ with $\deg(a_i)\le m-i$ for $i=0,\dots,m-1$. Let $r_i\in\mathbb F_q$ be the $(m-i)$-th coefficient of $a_i$. Then the coefficient of $t^m$ in the above equation is
\[\omega^m+r_{m-1}\omega^{m-1}+\dots+r_0=0,\]
which means $\omega\in\mathbb F_{q^m}$. Since $\omega$ is a primitive $n$-th root of unity, $n$ must divide $q^m-1$.\qed
\end{proof}

If $\gcd(n,q)=1$ and $q^2\equiv1\pmod n$, such polynomial $f$ with $[M:\mathbb F_q(x)]=2$ does exist, and has a unique form. We will leave it in the next section.

\begin{example}
Consider polynomials of degree $n=5$. Note that $\phi(5)=4$ and $q^4\equiv1\pmod5$. If $q\equiv\pm1\pmod5$, then it is easy to find polynomials such that $[M:\mathbb F_q(x)]=2$, as will be seen. If $q\equiv\pm2\pmod5$, then $[M:\mathbb F_q(x)]\ge4$, since in this case neither $q^2\equiv1\pmod5$ nor $q^3\equiv1\pmod5$.
\end{example}

One may ask what happens if $\gcd(n,q)\ne1$. In fact, under some conditions, we can obtain similar results.

\begin{proposition}
Let $[M:\mathbb F_q(x)]=m$ and $n=kp^l$ for some integer $m,k,l$ with $\gcd(k,q)=\gcd(m,q)=1$. If the Galois group $G=\Gal(M/\mathbb F_q(t))$ has a normal Sylow $p$-subgroup, then $q^m\equiv1\pmod k$.
\end{proposition}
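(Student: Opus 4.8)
The plan is to exploit the normal Sylow $p$-subgroup to peel off the wild part of $f$ and reduce the claim to Theorem~\ref{extension} applied to a tame polynomial of degree $k$. Write $H=\Gal(M/\mathbb F_q(x))$, so that $\mathbb F_q(x)=M^{H}$ and $|H|=[M:\mathbb F_q(x)]=m$, and let $P$ be the normal Sylow $p$-subgroup of $G$. Since $|G|=[M:\mathbb F_q(t)]=mn=mkp^{l}$ with $\gcd(mk,q)=1$, the $p$-part of $|G|$ is exactly $p^{l}$, so $|P|=p^{l}$. Because $p\nmid m$, we get $H\cap P=\{1\}$, and as $P$ is normal the product $HP$ is a subgroup of order $mp^{l}$, whence $[G:HP]=k$ and $[HP:P]=m$.

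First I would identify the fixed field $\mathbb F_q(x)\cap M^{P}=M^{\langle H,P\rangle}=M^{HP}$. This is an intermediate field of $\mathbb F_q(x)/\mathbb F_q(t)$, so Lemma~\ref{decomposable} gives $M^{HP}=\mathbb F_q(h(x))$ for a decomposition $f=g\circ h$ with $\deg g=[M^{HP}:\mathbb F_q(t)]=k$ and $\deg h=p^{l}$. Taking $h$ monic forces $g$ monic, and an affine change of $t$ normalizes $g(0)=0$. Setting $w=h(x)$ we have $g(w)=t$ with $\gcd(k,q)=1$, hence $p\nmid\deg g$ and $g\notin\mathbb F_q[T^{p}]$, so $\mathbb F_q(w)/\mathbb F_q(t)$ is separable and $g$ is a legitimate input for Theorem~\ref{extension}.

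The crux is to bound the Galois closure of $\mathbb F_q(w)/\mathbb F_q(t)$. Let $M_g$ be the splitting field of $g(T)-t$ over $\mathbb F_q(t)$; since $M/\mathbb F_q(t)$ is Galois and contains $\mathbb F_q(w)$, the compositum of the conjugates of $\mathbb F_q(w)$ lies in $M$, so $M_g\subseteq M$ and $M_g=M^{C}$ where $C=\operatorname{core}_G(HP)$ is the largest normal subgroup of $G$ contained in $HP$. The key observation is that $P\subseteq HP$ is normal in $G$, hence $P\subseteq C$; consequently $[M_g:\mathbb F_q(w)]=[HP:C]$ divides $[HP:P]=m$. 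Applying Theorem~\ref{extension} to $g$ (with $k$ and $[HP:C]$ in the roles of $n$ and $m$) yields $q^{[HP:C]}\equiv1\pmod k$, and since $[HP:C]\mid m$ the multiplicative order of $q$ modulo $k$ divides $m$, giving $q^{m}\equiv1\pmod k$, as desired.

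I expect the main obstacle to be the bookkeeping that ties the group-theoretic picture to the function-field one: verifying that $M^{HP}$ is genuinely rational and corresponds via Lemma~\ref{decomposable} to a true polynomial decomposition $f=g\circ h$ with $\deg g=k$, and checking that the normalizations (monic $g$, $g(0)=0$) needed to invoke Theorem~\ref{extension} alter neither $[M_g:\mathbb F_q(w)]$ nor the congruence modulo $k$. The conceptual heart is the identification $M_g=M^{\operatorname{core}_G(HP)}$ together with the inclusion $P\subseteq\operatorname{core}_G(HP)$; once these are in place, the divisibility $[M_g:\mathbb F_q(w)]\mid m$ and the final congruence follow immediately.
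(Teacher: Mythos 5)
Your proof is correct and follows essentially the same route as the paper: both form the subgroup $HP$ (product of the normal Sylow $p$-subgroup with $\Gal(M/\mathbb F_q(x))$), take its fixed field, invoke Lemma~\ref{decomposable} to obtain a degree-$k$ polynomial $g$, use normality of $P$ to bound the degree of the Galois closure of $\mathbb F_q(w)/\mathbb F_q(t)$ by $m$, and then apply Theorem~\ref{extension}. The only cosmetic difference is that the paper bounds the Galois closure by observing it is contained in the Galois extension $M^P/\mathbb F_q(t)$, whereas you identify it exactly as $M^{\operatorname{core}_G(HP)}$ and use $P\subseteq\operatorname{core}_G(HP)$ --- the same normality argument in group-theoretic dress.
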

\begin{proof}
Let $P$ be the Sylow $p$-subgroup of $G$ with fixed field $E$, and $H=P\cdot\Gal(M/\mathbb F_q(x))$. Then $H$ is a subgroup of order $mp^l$, as $P\cap\Gal(M/\mathbb F_q(x))$ is trivial. The fixed field $F$ of $H$ is $E\cap\mathbb F_q(x)$, and $[F:\mathbb F_q(t)]=|G|/|H|=k$. By Lemma \ref{decomposable}, there exists some polynomials $g$ of degree $k$ over $\mathbb F_q$ such that $F=\mathbb F_q(u)$ with $g(u)=t$. Now $E/\mathbb F_q(t)$ is Galois by the assumption that $P$ is normal, so the Galois closure of $\mathbb F_q(u)/\mathbb F_q(t)$ is contained in $E$, and its degree over $\mathbb F_q(u)$ divides $[E:\mathbb F_q(u)]=m$. It follows from the above theorems that $q^m\equiv1\pmod k$.\qed
\end{proof}

\section{Instances of Polynomials with Small Galois Groups}\label{instances}

\subsection{Dickson Polynomials}

Let $D_n(T,a)$ be the Dickson polynomial (of the first kind) of degree $n$ for some parameter $a\in\mathbb F_q$, defined as
\[D_n(T,a)=\sum_{i=0}^{\lfloor n/2\rfloor}\frac n{n-i}\binom{n-i}i(-a)^iT^{n-2i}.\]
When $a=0$, it is a monomial. A basic property of this polynomial is that
\[D_n(u+\frac au,a)=u^n+\frac{a^n}{u^n}\]
for an indeterminate $u$. Moreover, if $n=kp^l$ for some integers $k,l$ with $\gcd(k,q)=1$, then $D_n(T,a)=D_k(T,a)^{p^l}$. For this reason we may assume $\gcd(n,q)=1$. Let $\omega\in\mathbb F_{q^2}$ be a primitive $n$-th root of unity. Then
\[D_n(T,a)=T\prod_{i=1}^{(n-1)/2}\left(T^2+a(\omega^i-\omega^{-i})^2\right),\]
if $n$ is odd, and
\[D_n(T,a)-D_n(0,a)=T^2\prod_{i=1}^{n/2-1}\left(T^2+a(\omega^i-\omega^{-i})^2\right),\]
if $n$ is even.

There are many other essential properties of the Dickson polynomials. In the context of this paper, it turns out that they can be characterized in another way. First, we shall discuss further the conclusion of Lemma \ref{series}, and then extend Theorem \ref{extension} in the quadratic case, with the explicit form of the polynomial $f$ (still we always assume that $f$ is monic and $f(0)=0$).

Using the notation in Lemma \ref{series}, we claim that if, in addition, $n>2$ and the term of degree $n-1$ of $f$ vanishes, then $c_0=0$ and $c_1\in\mathbb F_q$ for the root $\delta(t)=t+\sum_{j\ge0}c_j/t^j\in\overline{\mathbb F_q}((t))$ of $f(T)-t^n$. Denote $f(T)=T^n+\gamma T^{n-2}+\cdots$ for $\gamma\in\mathbb F_q$. It can be checked that
\[\delta(t)^n=t^n+nc_0t^{n-1}+\left(\frac{n(n-1)}2c_0^2+nc_1\right)t^{n-2}+\cdots,\]
\[\delta(t)^{n-2}=t^{n-2}+(n-2)c_0t^{n-3}+\cdots,\]
\[\delta(t)^{n-3}=t^{n-3}+\cdots.\]
Since $\gcd(n,q)=1$ and $t^n=f(\delta(t))=\delta(t)^n+\gamma\delta(t)^{n-2}+\cdots$, comparing the coefficients we get $c_0=0$ and $nc_1+\gamma=0$.

\begin{theorem}
If $[M:\mathbb F_q(x)]=2$ and $\gcd(n,q)=1$, then $f(T)=D_n(T+b,a)-D_n(b,a)$ for some $a,b\in\mathbb F_q$.
\end{theorem}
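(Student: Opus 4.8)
The plan is to turn the Laurent-series root $\delta$ of Lemma~\ref{series} into an explicit generator, extract from Lemma~\ref{polynomial} the unique quadratic relation it must satisfy, and identify that relation with the one defining a Dickson polynomial. First I would normalize: since $\gcd(n,q)=1$, there is a unique $b\in\mathbb F_q$ for which $f(T+b)-f(b)$ has no $T^{n-1}$-term; because $D_n(T,a)$ has no $T^{n-1}$-term either and $\mathbb F_q(t)=\mathbb F_q(t+b)$, it suffices to prove $f(T)=D_n(T,a)-D_n(0,a)$ under the assumption that the $T^{n-1}$-coefficient of $f$ vanishes, the substitution $T\mapsto T+b$ then producing the stated form. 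Writing $\gamma$ for the $T^{n-2}$-coefficient, the computation just before the theorem gives $c_0=0$ and $c_1=-\gamma/n$; I set $a:=c_1\in\mathbb F_q$, the only possible parameter since $-na$ is exactly the $T^{n-2}$-coefficient of $D_n(T,a)$.

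Next I would use the reduction preceding Lemma~\ref{series} to work inside the splitting field of $f(T)-t^n$ in $\overline{\mathbb F_q}((t))$, and set $x=\delta(t)$, $y=\delta(\omega t)$. Since $[M:\mathbb F_q(x)]=2$, Lemma~\ref{polynomial} gives a minimal polynomial $T^2+a_1(x)T+a_0(x)$ with $a_1,a_0\in\mathbb F_q[x]$, $\deg a_1\le1$, $\deg a_0=2$. The Galois conjugate $y'$ of $y$ is again a root of $f(T)-t^n$, hence $y'=\delta(\omega^{j}t)$ for some $j$ (necessarily with $j^2\equiv1\pmod n$). Expanding $a_1(x)=-(y+y')$ as a Laurent series and demanding that it be a polynomial in $x$ of degree $\le1$ forces, coefficient by coefficient, either $c_1=0$ or $j\equiv-1$; assuming $a=c_1\ne0$ (the degenerate case $a=0$, where $f=T^n=D_n(T,0)$, being treated separately), I obtain $j=-1$ and then
\[
a_1(x)=-(\omega+\omega^{-1})x,\qquad a_0(x)=x^2+a(\omega-\omega^{-1})^2 .
\]
In particular $\omega+\omega^{-1}\in\mathbb F_q$, in agreement with $q^2\equiv1\pmod n$ from Theorem~\ref{extension}.

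To conclude I would set $w:=(y-\omega^{-1}x)/(\omega-\omega^{-1})$. Substituting $y=(\omega-\omega^{-1})w+\omega^{-1}x$ into the quadratic relation reduces it to $(\omega-\omega^{-1})^2(w^2-xw+a)=0$, so $x=w+a/w$ and $y=\omega w+a\omega^{-1}/w$; since the automorphism $\rho\colon t\mapsto\omega t$ sends $x$ to $y$ and $w=t+\cdots$ is monic, comparing leading terms shows $\rho(w)=\omega w$, whence $w^n$ is fixed by $\rho$. By the defining identity of Dickson polynomials, $D_n(x,a)=D_n(w+a/w,a)=w^n+a^n/w^n$ is then $\rho$-invariant, and so is $f(x)=t^n$. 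Hence $(f-D_n(\cdot,a))(\delta(t))$ is a $\rho$-invariant Laurent series; as $f-D_n(\cdot,a)$ has degree at most $n-1$, this series has $t$-degree at most $n-1$, while $\rho$-invariance permits only exponents divisible by $n$. Since $\delta(t)=t+\cdots$ is monic, this forces $f-D_n(\cdot,a)$ to be the constant $-D_n(0,a)$, and undoing the normalization gives $f(T)=D_n(T+b,a)-D_n(b,a)$.

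The step I expect to be the main obstacle is the determination of $a_1$ and $a_0$. One must show that the single quadratic relation from Lemma~\ref{polynomial}, together with $\mathbb F_q$-rationality and the degree bounds, is rigid enough to force both $j=-1$ and the exact coefficients above, and one must separate the genuinely different cases $\omega\in\mathbb F_q$ (so $q\equiv1\pmod n$) and $\omega\notin\mathbb F_q$ (so $q\equiv-1\pmod n$), in the latter of which the Frobenius relation $\omega^q=\omega^{-1}$ is what excludes $j\ne-1$. The degenerate case $a=0$ must also be disposed of directly: there the same coefficient analysis, applied to $a_0$ as well as $a_1$, should force $\delta(t)=t$ and hence $f=T^n$. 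Once this rigidity is secured, the $\rho$-invariance argument closing the proof is short and robust.
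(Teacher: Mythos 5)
Your skeleton follows the paper's own proof up to the determination of the quadratic relation (normalization killing the $T^{n-1}$-term, the Laurent-series root with $c_0=0$ and $c_1=a\in\mathbb F_q$, and the degree bounds of Lemma~\ref{polynomial} applied to the conjugate pair), but there is a genuine gap exactly at the step you yourself flag as the main obstacle --- and the difficulty is not where you locate it. Writing $y'=\delta(\omega^j t)$ and $a_1(x)=r_1x+r_0$, comparison of Laurent coefficients gives $r_0=0$ and $r_1=-(\omega+\omega^j)$ at orders $t^0,t^1$, and at order $t^{-1}$ (when $c_1\neq0$) the identity $\omega^{-1}+\omega^{-j}=\omega+\omega^{j}$, which factors as
\[
(\omega+\omega^{j})\left(\omega^{j+1}-1\right)=0 .
\]
So besides $j\equiv-1\pmod n$ there is a third alternative, $\omega^{j}=-\omega$ (possible whenever $n$ is even, with $j=n/2+1$), and your claimed dichotomy ``$c_1=0$ or $j\equiv-1$'' is false as stated. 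In that third case $a_1=0$, which is $\mathbb F_q$-rational for free, so the Frobenius relation $\omega^q=\omega^{-1}$ that you propose as the decisive tool gives nothing; nor do the higher coefficients of the $a_1$-equation, which only involve the even-index coefficients $c_{2k}$ of $\delta$ and may all vanish; nor does rationality of $a_0$, whose leading coefficient is then $-\omega^2$, which can lie in $\mathbb F_q$ even when $q\not\equiv\pm1\pmod n$ (e.g.\ $n=8$, $q\equiv5\pmod 8$, consistent with $q^2\equiv1\pmod n$). This is precisely the case $r_1=0$ to which the paper devotes a separate paragraph: either $\omega^2=-1$ (then the case collapses back to $j\equiv-1$ and the Dickson relation still comes out), or one uses $[M:\mathbb F_q(x)]=2$ to force $\mathbb F_q(u_0,u_1)=M=\mathbb F_q(u_0,u_{-1})$ and derives a contradiction by comparing the discriminants of the two defining quadratics (odd $q$), respectively by inseparability (even $q$). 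Without an argument of this kind your determination of $a_1,a_0$, and hence everything after it, is incomplete. (Your parenthetical claim that necessarily $j^2\equiv1\pmod n$ is also unjustified, though you never use it.)

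Apart from this gap, the rest is sound, and your ending is genuinely different from the paper's and worth keeping. The paper, having identified the conjugate of $u_i$ as $u_{-i}$ for \emph{every} $i$, multiplies the pair relations $u_iu_{-i}=u_0^2+c_1(\omega^i-\omega^{-i})^2$ into $(-1)^nu_0u_1\cdots u_{n-1}=-f(u_0)$ and recognizes the factored form of $D_n$; you instead need only the single relation for $u_1$: the substitution $w=(y-\omega^{-1}x)/(\omega-\omega^{-1})$ does reduce it to $w^2-xw+a=0$, the leading-term comparison does give $\rho(w)=\omega w$ for $\rho\colon t\mapsto\omega t$ (the other root $a/(\omega w)$ of $z+a/z=\rho(x)$ has negative $t$-degree), and then $\rho$-invariance of $f(x)-D_n(x,a)$, a Laurent series of $t$-degree at most $n-1$ whose exponents must all be divisible by $n$, forces $f-D_n(\cdot,a)$ to be constant, equal to $-D_n(0,a)$ since $f(0)=0$. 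This is correct and pleasantly avoids the pair-by-pair analysis --- but for that very reason it makes the missing case above indispensable, since your entire proof rests on the single quadratic relation for $u_1$ being exactly the Dickson one.
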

\begin{proof}
Without loss of generality, assume that the coefficient of degree $n-1$ of $f$ is 0. Then there is a root $\delta(t)=t+\sum_{j\ge0}c_j/t^j\in\overline{\mathbb F_q}((t))$ of $f(T)-t^n$ , where $c_0=0$ and $c_1\in\mathbb F_q$. The roots of $f(T)-t^n$ are given by $u_0=\delta(t),u_1=\delta(\omega t),\dots,u_{n-1}=\delta(\omega^{n-1}t)$ for a primitive $n$-th root $\omega$ of unity in $\mathbb F_{q^2}$, as shown in Theorem \ref{extension}. For an integer $i$, if $u_i\in\mathbb F_q(u_0)$, then $u_i=\alpha u_0+\beta$ for some $\alpha,\beta\in\mathbb F_q$ with $\alpha\ne0$, as discussed in Proposition \ref{deg_1}. It follows that $\beta=0$ and $\omega^i=\alpha=\omega^{-i}$, which means $\omega^i=-1$ with $n=2i$.

Now suppose that $u_i\notin\mathbb F_q(u_0)$, and let $u_j$ for some integer $j$ be its conjugate with respect to $\mathbb F_q(u_0)$. For $r_1,r_0,s_2,s_1,s_0\in\mathbb F_q$, denote by
\begin{equation}\label{d_0}T^2+(r_1u_0+r_0)T+s_2u_0^2+s_1u_0+s_0\end{equation}
the minimal polynomial of $u_i$ and $u_j$ over $\mathbb F_q(u_0)$. Computing the coefficients of $\delta(\omega^it)^2+(r_1u_0+r_0)\delta(\omega^it)+s_2u_0^2+s_1u_0+s_0$ in $t$ and the same for $j$ leads to
\begin{equation}\label{d_1}\omega^{2i}+r_1\omega^i+s_2=\omega^{2j}+r_1\omega^j+s_2=0,\end{equation}
\begin{equation}\label{d_2}r_0\omega^i+s_1=r_0\omega^j+s_1=0,\end{equation}
\begin{equation}\label{d_3}2c_1+c_1r_1(\omega^i+\omega^{-i})+2c_1s_2+s_0=2c_1+c_1r_1(\omega^j+\omega^{-j})+2c_1s_2+s_0=0.\end{equation}
It follows immediately from $\eqref{d_2}$ that $r_0=s_1=0$. Assume $c_1r_1\ne0$. Then multiplying \eqref{d_3} by $\omega^i$ or $\omega^j$ yields a quadratic equation with two roots $\omega^i$ and $\omega^j$, as well as \eqref{d_1}. Thus $s_2=1$ and $2c_1+2c_1s_2+s_0=c_1r_1^2$. Note that $\omega^i\cdot\omega^j=s_2=1$ and $\omega^i+\omega^j=-r_1$, so
\[s_0=c_1r_1^2-4c_1=c_1((\omega^i+\omega^j)^2-4)=c_1(\omega^i-\omega^{-i})^2.\]
By \eqref{d_0} we know
\[u_iu_{-i}=s_2u_0^2+s_1u_0+s_0=u_0^2+c_1(\omega^i-\omega^{-i})^2,\]
and meanwhile
\[(-1)^nu_0u_1\cdots u_{n-1}=-t^n=-f(u_0).\]
Consequently, if $n$ is odd, then
\[\begin{split}f(u_0)=&u_0\prod_{i=1}^{n-1}u_i=u_0\prod_{i=1}^\frac{n-1}2u_iu_{-i}\\=&u_0\prod_{i=1}^\frac{n-1}2(u_0^2+c_1(\omega^i-\omega^{-i})^2)\\=&D_n(u_0,c_1).\end{split}\]
If $n$ is even, then
\[\begin{split}f(u_0)=&-u_0\prod_{i=1}^{n-1}u_i=-u_0u_{n/2}\prod_{i=1}^{\frac n2-1}u_iu_{-i}\\=&u_0^2\prod_{i=1}^{\frac n2-1}(u_0^2+c_1(\omega^i-\omega^{-i})^2)\\=&D_n(u_0,c_1)-D_n(0,c_1).\end{split}\]

It remains to discuss the case $c_1r_1\ne0$. Note that $c_1$ does not depend on the choice of $i,j$, but $r_1$ does. If $c_1=0$, then $s_1=s_0=0$ by \eqref{d_2} and \eqref{d_3}, and thus $u_iu_j=s_2u_0^2$. This holds for any $i$ such that $u_i\notin\mathbb F_q(u_0)$, so $f(u_0)$ is a product of monomials in $u_0$, namely, $f(u_0)=D_n(u_0,0)$. Now suppose $c_1\ne0$ and $r_1=0$. Then $s_2=-\omega^{2i}$ and $s_0=-2c_1(1-\omega^{2i})$. The function field $\mathbb F_q(u_0,u_i)/\mathbb F_q(u_0)$ is defined by
\[u_i^2-\omega^{2i}u_0^2-2c_1(1-\omega^{2i})=0.\]
If $s_2=1$, then $\omega^{-i}=\omega^j=-\omega^i$, and $s_0=-4c_1=c_1(\omega^i-\omega^{-i})^2$, and the same result follows. If $s_2\ne1$, then $u_i$ is not the conjugate of $u_{-i}$. Recall that $\omega^i+\omega^j=r_1\ne0$ implies $\omega^i\cdot\omega^j=s_2=1$ for any choice of $i,j$. Therefore, the conjugate of $u_{-i}=\delta(\omega^{-i}t)$ is $\delta(-\omega^it)$. Then by the same argument, $\mathbb F_q(u_0,u_{-i})/\mathbb F_q(u_0)$ is defined by
\[u_{-i}^2-\omega^{-2i}u_0^2-2c_1(1-\omega^{-2i})=0.\]
If $q$ is odd, then $\mathbb F_q(u_0,u_i)\ne\mathbb F_q(u_0,u_{-i})$ (to see this, compare their discriminants), which gives rise to a contradiction. If $q$ is even, then $\mathbb F_q(u_0,u_i)/\mathbb F_q(u_0)$ is inseparable, also a contradiction.\qed
\end{proof}

\begin{theorem}\label{dickson}
Suppose $f(T)=D_n(T,a)-D_n(0,a)$ for $a\in\mathbb F_q^*$, $n>2$ with $\gcd(n,q)=1$. If $q\equiv\pm 1\pmod n$, then $[M:\mathbb F_q(x)]=2$. Moreover, the full constant field of $M/\mathbb F_q$ is $\mathbb F_q$ if and only if $q\equiv\pm1\pmod n$.
\end{theorem}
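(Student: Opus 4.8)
The plan is to realize the splitting field $M$ inside a rational function field by the usual Dickson substitution and then read off both the degree and the constant field from a single quadratic generator. Let $u$ be transcendental over $\mathbb F_q$ and set $x=u+a/u$, so that $D_n(x,a)=u^n+a^n/u^n=:s$ and $t=s-D_n(0,a)$; hence $\mathbb F_q(t)=\mathbb F_q(s)\subseteq\mathbb F_q(x)\subseteq\mathbb F_q(u)$ with $[\mathbb F_q(x):\mathbb F_q(s)]=n$ and $[\mathbb F_q(u):\mathbb F_q(x)]=2$. With $\omega$ a primitive $n$-th root of unity, the $n$ distinct roots of $f(T)-t=D_n(T,a)-s$ are $x_i=\omega^i u+a\omega^{-i}/u$ for $0\le i<n$, so $M=\mathbb F_q(x)(x_1,\dots,x_{n-1})$ and $x_0=x$.

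First I would record the identities $x_i+x_{-i}=(\omega^i+\omega^{-i})x$ and the three-term recurrence $x_{i+1}+x_{i-1}=(\omega+\omega^{-1})x_i$, and observe that $\omega+\omega^{-1}$ is fixed by the Frobenius $c\mapsto c^q$ exactly when $\{\omega^q,\omega^{-q}\}=\{\omega,\omega^{-1}\}$, i.e. exactly when $q\equiv\pm1\pmod n$. Assuming $q\equiv\pm1\pmod n$, the coefficient $\lambda:=\omega+\omega^{-1}$ lies in $\mathbb F_q$, so the recurrence started from $x_0\in\mathbb F_q(x)$ and $x_1$ shows every $x_i\in\mathbb F_q(x)(x_1)$; thus $M=\mathbb F_q(x)(x_1)$. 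A direct computation gives the minimal polynomial $T^2-\lambda xT+(x^2-4a+a\lambda^2)$ of $x_1$ over $\mathbb F_q(x)$, with discriminant $(\lambda^2-4)(x^2-4a)=(\omega-\omega^{-1})^2(x^2-4a)$; since $n>2$ this constant is nonzero and $x^2-4a$ is a nonconstant squarefree polynomial (as $a\ne0$), so the quadratic is irreducible and $[M:\mathbb F_q(x)]=2$, which is the first assertion.

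For the constant field when $q\equiv\pm1\pmod n$, I would argue that $M/\mathbb F_q(x)$ is a \emph{ramified} quadratic extension: in odd characteristic it ramifies at the zeros of $x^2-4a$ (a place of odd valuation for the radicand), and in characteristic $2$ the same extension is the Artin–Schreier extension $S^2+S=\lambda^{-2}+a/x^2$, which is wildly ramified at $x=0$. Since the only quadratic extension of the rational field $\mathbb F_q(x)$ whose full constant field is $\mathbb F_{q^2}$ is the constant field extension $\mathbb F_{q^2}(x)$, and that extension is unramified everywhere, the mere presence of ramification forces the full constant field of $M$ to be $\mathbb F_q$.

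Finally, for the converse I would use that $\lambda=\omega+\omega^{-1}=(x_1+x_{-1})/x_0$ always lies in $M$ and is algebraic over $\mathbb F_q$, hence belongs to the full constant field of $M$ for every $q$. If that constant field equals $\mathbb F_q$, then $\omega+\omega^{-1}\in\mathbb F_q$, which by the Frobenius computation above forces $q\equiv\pm1\pmod n$. I expect the constant field analysis to be the main obstacle: one must argue uniformly in the characteristic, and the clean route is this ramification dichotomy (rather than a count of completely split places, which would circularly require $\mathcal G(f)>0$), together with the care needed to handle the Artin–Schreier case in characteristic $2$, where the discriminant is unavailable and $x^2-4a$ degenerates.
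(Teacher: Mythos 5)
Your proof is correct in substance and shares the paper's skeleton --- the Dickson substitution $x=u+a/u$, the parametrized roots $x_i=\omega^iu+a\omega^{-i}/u$, the reduction of $M$ to $\mathbb F_q(x)(x_1)$ once $\omega+\omega^{-1}\in\mathbb F_q$, and the converse via the constant $(x_1+x_{n-1})/x_0=\omega+\omega^{-1}$ (this last step is verbatim the paper's) --- but your mechanism for the central step, namely $[M:\mathbb F_q(x)]=2$ together with the full constant field being $\mathbb F_q$, is genuinely different. The paper proves the single stronger statement $x_1\notin\mathbb F_{q^2}(x)$: since $x_1$ is integral over $\mathbb F_{q^2}[x]$, membership in $\mathbb F_{q^2}(x)$ would force $x_1\in\mathbb F_{q^2}[x]$, hence $x_1=c_1x+c_0$ for constants $c_0,c_1$, and comparing coefficients in the variable $u$ forces $\omega^2=1$, i.e.\ $n\le2$. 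That one computation is uniform in the characteristic and yields both conclusions at once (if the constant field were $\mathbb F_{q^2}$, then $M=\mathbb F_{q^2}(x)$ by degree count, contradicting $x_1\notin\mathbb F_{q^2}(x)$). You instead get irreducibility from the discriminant $(\omega-\omega^{-1})^2(x^2-4a)$ and pin down the constant field by exhibiting ramification in $M/\mathbb F_q(x)$ and invoking that the constant-field extension $\mathbb F_{q^2}(x)/\mathbb F_q(x)$ is everywhere unramified. Your route buys extra information --- the ramified places of $M/\mathbb F_q(x)$, which is exactly what the paper needs again in its subsequent computation of $\mathcal G(f)$ --- at the price of a characteristic split; the paper's route is shorter and characteristic-free.

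Two points in your characteristic-$2$ track need tightening. First, as you yourself flag at the end, the discriminant/squarefree argument of your second paragraph is vacuous in characteristic $2$ (there the discriminant $(\lambda x)^2$ is a square and $x^2-4a=x^2$ is not squarefree), so in that case the assertion $[M:\mathbb F_q(x)]=2$ must be \emph{derived from} the nontriviality of the Artin--Schreier extension; your write-up claims the degree before the Artin--Schreier analysis appears, leaving the even-characteristic degree claim momentarily unsupported. Second, the statement that $S^2+S=\lambda^{-2}+a/x^2$ is wildly ramified at $x=0$ is not immediate from the displayed equation: the pole there has order $2=p$, which is exactly the inconclusive case of the Artin--Schreier criterion. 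You must first normalize modulo $\wp(\mathbb F_q(x))$, replacing $a/x^2$ by $\sqrt a/x$ (legitimate since $q$ is even, so $a$ is a square in $\mathbb F_q$), after which the pole has order $1$, coprime to $p$; only then does the standard criterion give both irreducibility and total (wild) ramification at $x=0$. With these two routine repairs your argument is complete.
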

\begin{proof}
Let $t_1=t^n+\frac{a^n}{t^n}-D_n(0,a)$, and $\omega\in\overline{\mathbb F_q}$ a primitive $n$-th root of unity. Then the $n$ distinct roots of $f(T)-t_1\in\mathbb F_q(t_1)[T]$ are given by $u_i=\omega^it+a(\omega^it)^{-1}$, $i=0,\dots,n-1$. Suppose $q\equiv\pm1\pmod n$. Then for $2\le i<n$, it is viable to write $\omega^i=\alpha_i\omega+\beta_i$ for some $\alpha_i,\beta_i\in\mathbb F_q$. If $q\equiv1\pmod n$, set $\alpha_i=\frac{\omega^{-i}-\omega^i}{\omega^{-1}-\omega}$, and then $\omega^{-i}=\omega^i+\alpha_i(\omega^{-1}-\omega)=\alpha_i\omega^{-1}+\beta_i$. If $q\equiv-1\pmod n$, then $\omega^q=\omega^{-1}$ and $\omega^{-i}=(\alpha_i\omega+\beta_i)^q=\alpha_i\omega^{-1}+\beta_i$. It follows that $\omega^it+a(\omega^it)^{-1}=\alpha_i(\omega t+a(\omega t)^{-1})+\beta_i(t+at^{-1})$, which means the splitting field of $f(T)-t_1$ over $\mathbb F_q(t_1)$ is $\mathbb F_q(u_0,u_1)$.

Next we prove that $[\mathbb F_q(u_0,u_1):\mathbb F_q(u_0)]=2$. One can verify that
\[u_1^2-(\omega+\omega^{-1})u_0u_1+u_0^2+a(\omega-\omega^{-1})^2=0,\]
so $u_1$ is integral over $\mathbb F_q[u_0]$, and a fortiori, over $\mathbb F_{q^2}[u_0]$. If $u_1\in\mathbb F_{q^2}(u_0)$, then $u_1\in\mathbb F_{q^2}[u_0]$, since $\mathbb F_{q^2}[u_0]$ as a UFD is integrally closed. Noting that $t$ is transcendental over $\mathbb F_{q^2}$, we have $u_1=c_1u_0+c_0$ for some $c_1,c_0\in\mathbb F_{q^2}$. Comparing the coefficients yields $\omega=c_1$ and $\omega^{-1}a=c_1a$, so $\omega^2=1$ and $n=2$. Thus $u_1\notin\mathbb F_{q^2}(u_0)$, and consequently $[\mathbb F_q(u_0,u_1):\mathbb F_q(u_0)]=2$. If the full constant field of $\mathbb F_q(u_0,u_1)/\mathbb F_q$ is not $\mathbb F_q$, then it must be $\mathbb F_{q^2}$. In this case $\mathbb F_{q^2}(u_0)\subseteq\mathbb F_q(u_0,u_1)$ and $[\mathbb F_{q^2}(u_0):\mathbb F_q(u_0)]=2$. Hence $\mathbb F_q(u_0,u_1)=\mathbb F_{q^2}(u_0)$, but $u_1\notin\mathbb F_{q^2}(u_0)$, a contradiction. Then the full constant field is $\mathbb F_q$.

Suppose now that the full constant field of $M/\mathbb F_q$ is $\mathbb F_q$. Then $\omega+\omega^{-1}=(u_1+u_{n-1})/u_0\in\mathbb F_q$, as it is an algebraic element over $\mathbb F_q$ in $M$. The polynomial $T^2-(\omega+\omega^{-1})T+1$ in $\mathbb F_q[T]$ has roots $\omega$ and $\omega^{-1}$, so we have either $\omega^q=\omega$ or $\omega^q=\omega^{-1}$. This implies $q\equiv1\pmod n$ or $q\equiv-1\pmod n$.\qed
\end{proof}

\begin{remark}
Note that in \cite{Cohen-Matthews1996}, Cohen and Matthews have discussed the monodromy groups of Dickson polynomials. Notably, they showed that Dickson polynomials have a dihedral group as monodromy group in some cases. However, the converse was not considered and our results cannot be derived from \cite{Cohen-Matthews1996}. Furthermore, note that a small mistake has been incorporated in \cite{Cohen-Matthews1996} since $q^2\equiv1\pmod n$ is not equivalent to $q\equiv\pm1\pmod n$ when $n$ is not a prime.
\end{remark}

\begin{remark}
Let $f(T)=D_n(T,a)$ for some $a\in\mathbb F_q^*$. By Theorem \ref{factorization}, whenever there is a root $\alpha\in\mathbb F_q$ of $f(T)-c$ for $c\in\mathbb F_q$, every irreducible factor of $f(T)-c$ has degree 1 or 2. Furthermore, if $n$ is odd, then $f(T)-c$ is either product of linear factors, or product of $T-\alpha$ and quadratic irreducible polynomials over $\mathbb F_q$. To see this, note that for the $n$ roots $x,y_1,\dots,y_{n-1}$ of $f(T)-t$, none of them except $x$ belongs to $\mathbb F_q(x)$. If $\beta\in\mathbb F_q$ is another root of $f(T)-c$, then all other roots of $f(T)-c$ belong to $\mathbb F_q(\alpha,\beta)=\mathbb F_q$. If $n$ is even, it is not hard to draw a similar conclusion that $f(T)-c$ is product of $(T-\alpha)(T+\alpha)$ and quadratic irreducible polynomials over $\mathbb F_q$.
\end{remark}

Now that it has been shown that $[M:\mathbb F_q(x)]=2$ for Dickson polynomials, it is natural to ask what the exact value of $\mathcal G(f)$ is. Before that, we need a basic fact about squares in a finite field.

\begin{lemma}\label{square}
For fixed $c\in\mathbb F_q^*$, the number of $b\in\mathbb F_q$ such that $b^2+c$ is a square in $\mathbb F_q^*$ is $\frac{q-3}2$ if $-c$ is a square in $\mathbb F_q^*$, and $\frac{q-1}2$ otherwise.
\end{lemma}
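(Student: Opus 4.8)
The plan is to use the quadratic character of $\mathbb F_q$, so throughout I assume $q$ is odd; in even characteristic every element is a square and the statement degenerates, so that case is vacuous for this lemma. Let $\eta$ denote the quadratic character on $\mathbb F_q$, so that $\eta(a)=1$ if $a$ is a nonzero square, $\eta(a)=-1$ if $a$ is a nonsquare, and $\eta(0)=0$. For each $b$, the quantity $\tfrac{1+\eta(b^2+c)}2$ equals $1$ when $b^2+c$ is a nonzero square, $0$ when it is a nonsquare, and $\tfrac12$ when $b^2+c=0$. Writing $N$ for the number we want, $Z$ for the number of $b$ with $b^2+c=0$, and $S=\sum_{b\in\mathbb F_q}\eta(b^2+c)$, summing this expression over all $b$ gives $N+\tfrac{Z}2=\tfrac{q+S}2$, hence $N=\tfrac{q+S-Z}2$. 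Thus the whole proof reduces to evaluating the two quantities $S$ and $Z$.

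Next I would compute $S$. The cleanest route avoids any external character-sum formula and instead counts points on the conic $z^2-b^2=c$. Indeed, for fixed $b$ the equation $z^2=b^2+c$ has $1+\eta(b^2+c)$ solutions $z$, so the total number of pairs $(z,b)\in\mathbb F_q^2$ with $z^2-b^2=c$ equals $\sum_b\bigl(1+\eta(b^2+c)\bigr)=q+S$. On the other hand, the substitution $u=z-b$, $w=z+b$ is a bijection of $\mathbb F_q^2$ (its matrix has determinant $2\ne0$ because $q$ is odd) that transforms $z^2-b^2=c$ into $uw=c$; as $c\ne0$, the latter has exactly $q-1$ solutions, one for each nonzero $u$. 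Comparing the two counts yields $q+S=q-1$, that is $S=-1$, independently of whether $-c$ is a square.

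Finally I would count $Z$, the number of roots of $b^2+c$, equivalently the number of solutions of $b^2=-c$. Since $c\ne0$, this is $2$ when $-c$ is a nonzero square and $0$ when $-c$ is a nonsquare. Substituting $S=-1$ into $N=\tfrac{q+S-Z}2=\tfrac{q-1-Z}2$ then gives $N=\tfrac{q-3}2$ in the first case and $N=\tfrac{q-1}2$ in the second, exactly as claimed. The only real content is the evaluation $S=-1$, and the bijective argument above makes this painless, so I expect no genuine obstacle; the remaining work is just the bookkeeping of the degenerate term $b^2+c=0$, which is precisely what separates the two stated cases.
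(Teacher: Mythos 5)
Your proof is correct. Note that the paper itself offers no proof of this lemma at all --- it is stated as ``a basic fact about squares in a finite field'' and used as a black box in the subsequent computation of $\mathcal G(f)$ --- so there is no author argument to compare against; your write-up actually fills a gap the authors left to the reader. Your route is the natural one and every step checks out: the indicator $\tfrac{1+\eta(b^2+c)}{2}$ correctly handles the three cases (nonzero square, nonsquare, zero), giving $N=\tfrac{q+S-Z}{2}$; the evaluation $S=\sum_b\eta(b^2+c)=-1$ via counting points on $z^2-b^2=c$ with the substitution $u=z-b$, $w=z+b$ (a bijection since $q$ is odd) is a clean way to avoid invoking the standard formula $\sum_x\eta(x^2+c)=-\eta(1)$ for quadratic character sums, which is how this is usually dispatched; and $Z\in\{0,2\}$ according to whether $-c$ is a square finishes it. Two minor remarks: the hypothesis that $q$ is odd is indeed implicit in the statement (for even $q$ the claimed counts $\tfrac{q-3}{2}$, $\tfrac{q-1}{2}$ are not even integers, and the paper only invokes the lemma in its odd-$q$ case), though calling the even case ``vacuous'' is slightly loose --- it is excluded, not trivially true; and your determinant computation for the change of variables is the right justification that the conic has exactly $q-1$ points.
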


\begin{theorem}
If $f(T)=D_n(T,a)-D_n(0,a)$ for $a\in\mathbb F_q^*$, $n>2$ with $\gcd(n,q)=1$ and $q\equiv\pm1\pmod n$, then
\[\mathcal G(f)=\begin{cases}\left\lfloor\frac{q-3}{2n}\right\rfloor&\text{if $q$ is odd and }q\equiv\eta(a)\equiv1\pmod n,\\\left\lfloor\frac{q+1}{2n}\right\rfloor&\text{if $q$ is odd and }q\equiv\eta(a)\equiv-1\pmod n,\\\left\lfloor\frac q{2n}\right\rfloor&\text{otherwise},\end{cases}\]
where $\eta$ is the quadratic character of $\mathbb F_q^*$.
\end{theorem}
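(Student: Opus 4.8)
The plan is to convert complete splitting into a counting problem on a one‑dimensional torus and then evaluate an orbit count. First I would absorb the additive constant: since $f(T)-c=D_n(T,a)-(c+D_n(0,a))$ and $c\mapsto c+D_n(0,a)$ permutes $\mathbb F_q$, the number $\mathcal G(f)$ equals the number of $c'\in\mathbb F_q$ for which $D_n(T,a)-c'$ has $n$ distinct roots in $\mathbb F_q$. Using the defining identity $D_n(\lambda+a\lambda^{-1},a)=\lambda^n+a^n\lambda^{-n}$, every root $T$ of $D_n(T,a)-c'$ has the form $T=\lambda+a/\lambda$ where $\lambda^n$ is a root of $W^2-c'W+a^n$; the two roots $w_1,w_2$ satisfy $w_1w_2=a^n$, and a repeated root of $D_n(T,a)-c'$ occurs exactly when $w_1=w_2$, i.e. $c'^2=4a^n$. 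A short computation with the Frobenius (multiplying $\lambda^q+a\lambda^{-q}=\lambda+a\lambda^{-1}$ through by $\lambda\lambda^q$) shows that $\lambda+a/\lambda\in\mathbb F_q$ if and only if $\lambda\in\mathbb F_q^*$ or $\lambda^{q+1}=a$.

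Next I would exploit the hypothesis $q\equiv\pm1\pmod n$ to prove that all $n$ roots lie in $\mathbb F_q$ simultaneously if and only if the parameter $\lambda$ can be taken in a single cyclic torus $\mathcal T$: the split torus $\mathbb F_q^*$ (of order $q-1$) when $q\equiv1$, and the norm-$a$ circle $C_a=\{\lambda\in\mathbb F_{q^2}^*:\lambda^{q+1}=a\}$ (of order $q+1$) when $q\equiv-1$. The point is that the $n$ parameters differ by a primitive $n$-th root of unity $\zeta$, with $\zeta\in\mathbb F_q$ in the first case and $\zeta^{q+1}=1$ in the second; in either case one of the two alternatives above must hold uniformly across all roots, since a mixed configuration forces $\zeta^{2j}$ to be constant in $j$, impossible for $n>2$. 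On $\mathcal T$ the map $\lambda\mapsto c'=\lambda^n+a^n\lambda^{-n}$ is $2n$-to-one over each non-critical value it hits and $n$-to-one over each value with $w_1=w_2$, and every $c'$ in its image is either good or critical; hence
\[\mathcal G(f)=\frac{|\mathcal T|-N}{2n},\qquad N=\#\{\lambda\in\mathcal T:\lambda^{2n}=a^n\},\]
with $|\mathcal T|=q\mp1$.

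It then remains to compute $N$. Because $\gcd(n,|\mathcal T|)=n$ in both cases, $\lambda\mapsto\lambda^n$ is $n$-to-one onto the subgroup $\mathcal T^n$ of $n$-th powers, so $N=n\cdot\#\{w\in\mathcal T^n:w^2=a^n\}$. Evaluating the inner count amounts to deciding how many square roots of $a^n$ are $n$-th powers in $\mathcal T$, which depends only on the parity of $q$ and on whether $a$ is a square, that is, on $\eta(a)$; Lemma \ref{square} supplies the relevant square counts. Running through the arithmetic cases gives $N\in\{0,n,2n\}$, and substituting into the displayed formula and taking floors reproduces exactly the three branches: the constant $-3$, $+1$, or $0$ inside $\lfloor\cdot\rfloor$ records precisely which square roots of $a^n$ survive as $n$-th powers in $\mathcal T$.

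I expect the main obstacle to be the uniform-membership step of the second paragraph—rigorously excluding the possibility that some roots enter $\mathbb F_q$ via $\lambda\in\mathbb F_q^*$ while others enter via $\lambda^{q+1}=a$, which is where the hypothesis $n>2$ together with the $q\equiv\pm1\pmod n$ structure of the $n$-th roots of unity is essential. The remaining difficulty is bookkeeping: tracking the critical preimages counted by $N$ (and checking that they are indeed the only values excluded from $\mathcal G(f)$) carefully enough that the parity and quadratic-character cases assemble into the single floor expression claimed.
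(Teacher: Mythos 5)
Your proposal is sound and takes a genuinely different route from the paper. The paper stays inside its function-field framework: by Theorem \ref{dickson} the Galois closure is $M=\mathbb F_q(x,y)$ with $[M:\mathbb F_q(x)]=2$, and $\mathcal G(f)$ is obtained by counting the places $(x-b)$ of $\mathbb F_q(x)$ that split completely in $M/\mathbb F_q(x)$ (equivalent, via Lemma \ref{splitting} and the Galois property, to $(t-c)$ splitting completely in $\mathbb F_q(x)$); the count $\nu$ of $b$ for which $T^2-(\omega+\omega^{-1})bT+b^2+a(\omega-\omega^{-1})^2$ splits is evaluated by a trace condition in characteristic $2$ and by Lemma \ref{square} applied to the discriminant $(\omega-\omega^{-1})^2(b^2-4a)$ in odd characteristic, after which the at most $n-1$ ramified places are discarded and $\mathcal G(f)$ is pinned down as the unique integer in $\left[\frac{\nu-n+1}n,\frac\nu n\right]$. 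You instead use the functional equation $D_n(\lambda+a/\lambda,a)=\lambda^n+a^n\lambda^{-n}$ and count values of $\lambda\mapsto\lambda^n+a^n\lambda^{-n}$ on the split torus $\mathbb F_q^*$ (when $q\equiv1\pmod n$) or on the norm coset $C_a$ (when $q\equiv-1\pmod n$), with exact fibre sizes $2n$ and $n$. This is essentially the value-set method of \cite{liu2020}, which the paper's closing remark credits for the case $q\equiv1\pmod n$; your extension to $q\equiv-1\pmod n$ via $C_a$ is correct, your uniform-membership step (mixed configurations force $\zeta^{2j}$ constant, impossible for $n>2$) is exactly the right use of the hypotheses, and your identity $\mathcal G(f)=(|\mathcal T|-N)/(2n)$ is an exact count, arguably cleaner than the paper's interval argument. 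What each buys: yours is elementary and exact; the paper's is uniform with its general machinery and avoids computations in $\mathbb F_{q^2}$.

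Two points in your last step need repair. First, Lemma \ref{square} is not the tool that evaluates $N=\#\{\lambda\in\mathcal T:\lambda^{2n}=a^n\}$: that lemma counts $b$ with $b^2+c$ a square, whereas $N$ is a root count of $\lambda^2=a\zeta^j$ ($\zeta$ a primitive $n$-th root of unity) inside $\mathcal T$; for $\mathcal T=C_a$ this requires norm computations in $\mathbb F_{q^2}$ (e.g.\ a square root $\alpha$ of $a$ has norm $\eta(a)a$, and a square root $\xi$ of $\zeta$ has norm $\pm1$ according as $2n$ divides $q+1$ or not), or equivalently a character sum over the group of $n$-th roots of unity. Second, your claim that the count depends only on the parity of $q$ and on $\eta(a)$ is false as stated: for $q$ odd, $q\equiv1\pmod n$ and $\eta(a)=1$ one finds $N=2n$ when $(q-1)/n$ is even but $N=n$ when $(q-1)/n$ is odd, so $N$ also depends on the class of $q-1$ modulo $2n$ (similarly for $q\equiv-1$). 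The theorem survives because the floor absorbs this: $\lfloor(q-3)/(2n)\rfloor$ equals $(q-1-2n)/(2n)$ in the first subcase and $(q-1-n)/(2n)$ in the second, and likewise in the other branches. (A cosmetic point: $C_a$ is a coset of the norm-one circle, not a subgroup, so ``the subgroup $\mathcal T^n$'' is an abuse, though the $n$-to-one property of $\lambda\mapsto\lambda^n$ on $C_a$ does hold.) With the computation of $N$ actually carried out case by case and the extra congruence tracked, your argument goes through and reproduces the stated formula.
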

\begin{proof}
Let $\mathbb F_q(x,y)$ be the function field defined by $y^2-(\omega+\omega^{-1})xy+x^2+a(\omega-\omega^{-1})^2=0$. It is indeed the splitting field of $f(T)-t\in\mathbb F_q(t)[T]$. If a rational place of $\mathbb F_q(t)$ splits completely in $\mathbb F_q(x)$, then it also splits completely in $\mathbb F_q(x,y)$ by Lemma \ref{splitting}, so there are $n$ rational places (other than the infinite one) of $\mathbb F_q(x)$ splitting completely in $\mathbb F_q(x,y)$. Conversely, suppose that $(x-b)$, a place of $\mathbb F_q(x)$ for some $b\in\mathbb F_q$, splits completely in $\mathbb F_q(x,y)$. If $(x-b)$ is unramified in $\mathbb F_q(x)/\mathbb F_q(t)$, then the place $P$ of $\mathbb F_q(t)$ lying below $(x-b)$ splits completely in $\mathbb F_q(x)$, for $\mathbb F_q(x,y)/\mathbb F_q(t)$ is Galois, in which $P$ splits completely. If $(x-b)$ is ramified, then obviously $P$ can not split completely. Therefore it suffices to count the number of $b\in\mathbb F_q$ such that $(x-b)$ is unramified in $\mathbb F_q(x)/\mathbb F_q(t)$ and splitting completely in $\mathbb F_q(x,y)/\mathbb F_q(x)$.

Denote by $\nu$ the number of $b\in\mathbb F_q$ such that $(x-b)$ splits completely in $\mathbb F_q(x,y)/\mathbb F_q(x)$; that is, $T^2-(\omega+\omega^{-1})bT+b^2+a(\omega-\omega^{-1})^2$ has two distinct factors in $\mathbb F_q[T]$. When $q$ is even, that is equivalent to $b\ne0$ and
\[\begin{split}0=&\Tr_{\mathbb F_q/\mathbb F_2}\left(\frac{b^2+a(\omega-\omega^{-1})^2}{(\omega+\omega^{-1})^2b^2}\right)\\=&\Tr_{\mathbb F_q/\mathbb F_2}\left((\omega+\omega^{-1})^{-2}\right)+\Tr_{\mathbb F_q/\mathbb F_2}\left(\frac a{b^2}\right).\end{split}\]
Here we are using the fact that $\Tr_{\mathbb F_q/\mathbb F_2}(\alpha)=0$ for $\alpha\in\mathbb F_q$, if and only if $\beta^2-\beta=\alpha$ for some $\beta\in\mathbb F_q$. Then clearly either $\nu=\frac{q-2}2$, or $\nu=\frac q2$. When $q$ is odd, $(x-b)$ splits completely in $\mathbb F_q(x,y)/\mathbb F_q(x)$ if and only if the quadratic discriminant
\[\begin{split}&(\omega+\omega^{-1})^2b^2-4(b^2+4a(\omega-\omega^{-1})^2)\\=&((\omega+\omega^{-1})^2-4)b^2-4a(\omega-\omega^{-1})^2\\=&(\omega-\omega^{-1})^2(b^2-4a)\end{split}\]
is a square in $\mathbb F_q^*$. Note that if $\omega\notin\mathbb F_q$, then $(\omega-\omega^{-1})^2$ is a non-square in $\mathbb F_q$. In this case the number $\nu$ is obtained from Lemma \ref{square}, as
\[\nu=\begin{cases}\frac{q-3}2&\text{if }q\equiv\eta(a)\equiv1\pmod n,\\\frac{q-1}2&\text{if }q\equiv-\eta(a)\pmod n,\\\frac{q+1}2&\text{if }q\equiv\eta(a)\equiv-1\pmod n.\end{cases}\]

Meanwhile, $(x-b)$ is ramified in $\mathbb F_q(x)/\mathbb F_q(t)$ if and only if $(x-b)$ is a zero of $f^\prime(x)$, i.e. $f^\prime(b)=0$. There are at most $n-1$ such places. Finally we have
\[\frac{\nu-n+1}n\le\mathcal G(f)\le\frac\nu n.\]
There is only one integer in the interval $\left[\frac{\nu-n+1}n,\frac\nu n\right]$. Note that if $q\equiv\pm1\pmod n$, then $\left\lfloor\frac{q-1}{2n}\right\rfloor=\left\lfloor\frac q{2n}\right\rfloor$, and if in addition $q$ is even, then $\left\lfloor\frac{q-2}{2n}\right\rfloor=\left\lfloor\frac q{2n}\right\rfloor$. The desired result then follows.\qed
\end{proof}

\begin{remark}
In \cite{liu2020}, the case $q\equiv1\pmod n$ has been discussed, where the formula is obtained by studying the value sets of Dickson polynomials over finite fields. Here we can generalize it to all cases, with proof using the language of function fields. In particular, if $q\not\equiv1\pmod n$ and $q\not\equiv-1\pmod n$, then $\mathcal G(f)=0$, since the full constant field of $M/\mathbb F_q$ is not $\mathbb F_q$.
\end{remark}

\subsection{Powers of Linearized Polynomials}

In Proposition \ref{deg_1}, it is shown that powers of linearized polynomials are likely to have minimal Galois groups. The following proposition is actually a generalization.

\begin{proposition}
Let $f(T)=\left(h(T)\right)^k$, where $h(T)=\sum_{b\in B}(T-b)$, $n=kp^l$ for some integers $k,l$, with $B$ an additive subgroup of order $p^l$ in $\mathbb F_q$. If $q\equiv1\pmod k$, with $\omega$ a primitive $k$-th root of unity in $\mathbb F_q$, let $j$ be the least positive integer such that $\omega^j=c_0+c_1\omega+\dots+c_{j-1}\omega^{j-1}$ for some $c_i\in\mathbb F_q$ with $c_i B=B$, $0\le i<j$. Then
\begin{enumerate}
\item\label{item1} $[M:\mathbb F_q(x)]\le p^{l(j-1)}$;
\item\label{item2} there exists some $t_0\in\mathbb F_q$ such that $f(T)-t_0$ splits completely (without multiple roots) in $\mathbb F_q[T]$ if and only if there exists $u_0\in\mathbb F_q$ with $u_0\mathbb F_{p^d}\subseteq\{h(\alpha)\mid\alpha\in\mathbb F_q\}$ and $p^{l+d}\le q$, where $d$ is the least positive integer such that $p^d\equiv1\pmod k$;
\item\label{item3} $\frac d{\gcd(d,l)}\le j\le d$, and the lower bound is achieved if $B=\mathbb F_{p^l}$.
\end{enumerate}
\end{proposition}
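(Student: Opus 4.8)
The plan is to analyze the splitting field $M$ through the tower it induces. Writing $u=h(x)$, the relation $f(x)=h(x)^k=t$ gives $u^k=t$, so $\mathbb F_q(t)\subseteq\mathbb F_q(u)\subseteq\mathbb F_q(x)$ with $[\mathbb F_q(u):\mathbb F_q(t)]=k$ (a cyclic Kummer extension, since $\omega\in\mathbb F_q$) and $[\mathbb F_q(x):\mathbb F_q(u)]=p^l$. As $h$ is the vanishing polynomial of the $\mathbb F_p$-subspace $B\subseteq\mathbb F_q$, it is $\mathbb F_p$-linear with kernel $B$, and because $B\subseteq\mathbb F_q$ the lower step $\mathbb F_q(x)/\mathbb F_q(u)$ is already Galois with group $\{x\mapsto x+b:b\in B\}$. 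Since $h(T)^k-t=h(T)^k-u^k$, its roots are exactly the $x_i+b$ with $b\in B$ and $h(x_i)=\omega^iu$ for $i=0,\dots,k-1$; choosing $x_0=x$ and noting $u=h(x_0)\in\mathbb F_q(x_0)$, I obtain $M=\mathbb F_q(x_0,\dots,x_{k-1})$.

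For (\ref{item1}) the key is a scaling identity. Writing $\mathbb F_{p^e}=\{c\in\mathbb F_q:cB=B\}\cup\{0\}$ for the stabilizer subfield of $B$ (so $B$ is an $\mathbb F_{p^e}$-space and $e\mid l$), a direct computation with $h(T)=\prod_{b\in B}(T-b)$ gives $h(cT)=c^{p^l}h(T)$ for every $c\in\mathbb F_{p^e}$. Interpreting the defining relation of $j$ with coefficients in this stabilizer field, suppose $\omega^j=\sum_{i=0}^{j-1}c_i\omega^i$ with each $c_i\in\mathbb F_{p^e}$. Since $c\mapsto c^{p^l}$ is an automorphism of $\mathbb F_{p^e}$, I may set $\gamma_i=c_i^{1/p^l}\in\mathbb F_{p^e}$; then $\mathbb F_p$-additivity of $h$ and the scaling identity yield $h\!\left(\sum_{i<j}\gamma_ix_i\right)=\sum_{i<j}\gamma_i^{p^l}h(x_i)=\sum_{i<j}c_i\omega^iu=\omega^ju$. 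Hence $x_j:=\sum_{i<j}\gamma_ix_i$ is a root of $h(T)=\omega^ju$ lying in $\mathbb F_q(x_0,\dots,x_{j-1})$, and as $B\subseteq\mathbb F_q$ the whole fibre $x_j+B$ lies there too. Because $\{1,\omega,\dots,\omega^{j-1}\}$ spans $\mathbb F_{p^e}(\omega)$ over $\mathbb F_{p^e}$, every $\omega^m$ is an $\mathbb F_{p^e}$-combination of these lower powers, so inductively $x_m\in\mathbb F_q(x_0,\dots,x_{j-1})$ for all $m$. Thus $M=\mathbb F_q(x_0,\dots,x_{j-1})$, and since each successive $x_i$ satisfies the degree-$p^l$ equation $h(T)=\omega^iu$ over the preceding field, $[M:\mathbb F_q(x)]\le(p^l)^{j-1}=p^{l(j-1)}$.

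For (\ref{item2}) I would read off the splitting of $f(T)-t_0=h(T)^k-t_0$ from the image $I=\{h(\alpha):\alpha\in\mathbb F_q\}$, an $\mathbb F_p$-subspace of $\mathbb F_q$ of dimension $s-l$ (where $q=p^s$), each nonempty fibre of $h$ over a point of $I$ being a coset of $B$ consisting of $p^l$ distinct elements of $\mathbb F_q$. For $t_0\ne0$, the polynomial $f(T)-t_0$ splits into $n=kp^l$ distinct linear factors over $\mathbb F_q$ exactly when all $k$ of the $k$-th roots $u_0\omega^i$ of $t_0$ lie in $I$; since $I$ is an $\mathbb F_p$-space and the $\mathbb F_p$-span of $\{\omega^i\}$ equals $\mathbb F_p(\omega)=\mathbb F_{p^d}$, this is equivalent to $u_0\mathbb F_{p^d}\subseteq I$ for some nonzero $u_0$. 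As $u_0\ne0$, the subspace $u_0\mathbb F_{p^d}$ has $\mathbb F_p$-dimension $d$, so its containment in $I$ forces $d\le s-l$, that is $p^{l+d}\le q$; conversely any such $u_0$ produces the required $t_0=u_0^k$, giving the stated equivalence.

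For (\ref{item3}), with the interpretation above $j$ is precisely the degree $[\mathbb F_{p^e}(\omega):\mathbb F_{p^e}]$ of $\omega$ over the stabilizer field. Since $\mathbb F_{p^e}(\omega)=\mathbb F_{p^{\lcm(e,d)}}$, one gets $j=\lcm(e,d)/e=d/\gcd(d,e)$. Because $e\mid l$, we have $\gcd(d,e)\mid\gcd(d,l)$, whence $d/\gcd(d,l)\le d/\gcd(d,e)=j\le d$; and when $B=\mathbb F_{p^l}$ the stabilizer is all of $\mathbb F_{p^l}$, so $e=l$ and $j=d/\gcd(d,l)$, attaining the lower bound. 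I expect (\ref{item1}) to be the main obstacle: establishing $h(cT)=c^{p^l}h(T)$, checking that the $p^l$-th root $\gamma_i$ stays in the stabilizer field so that $h(\gamma_ix_i)$ behaves correctly, and then running the induction that expresses every $\omega^m$ — hence every root $x_m$ — in terms of $x_0,\dots,x_{j-1}$ while controlling the degree at each step. The remaining two items are comparatively routine linear-algebra and field-degree computations.
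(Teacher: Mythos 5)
Your proposal is correct and follows essentially the same route as the paper's own proof: the same tower $\mathbb F_q(t)\subseteq\mathbb F_q(u)\subseteq\mathbb F_q(x)$ with $u=h(x)$, the same construction of the extra roots as linear combinations $\sum_i c_i y_i$ using coefficients stabilizing $B$ (yielding $[M:\mathbb F_q(x)]\le p^{l(j-1)}$), the same reduction of (\ref{item2}) to $u_0\mathbb F_{p^d}\subseteq h(\mathbb F_q)$ with a dimension count, and the same identification $j=[\mathbb F_{p^e}(\omega):\mathbb F_{p^e}]=d/\gcd(d,e)$ for (\ref{item3}). The only cosmetic difference is that your $p^l$-th roots $\gamma_i=c_i^{1/p^l}$ are superfluous: since $c_iB=B$ forces $c_i\in\mathbb F_{p^e}$ with $e\mid l$, one already has $c_i^{p^l}=c_i$, which is exactly how the paper proceeds.
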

\begin{proof}
(1) Let $u=h(x)$. For $1\le i<j$, choose an element $y_i$ such that $h(y_i)=\omega^iu$. Clearly $f(y_i+b_0)=u^k=f(x)$ for any $b_0\in B$ and $[\mathbb F_q(x,y_i):\mathbb F_q(x)]\le p^l$. Accordingly $[\mathbb F_q(x,y_1,\dots,y_{j-1}):\mathbb F_q(x)]\le p^{l(j-1)}$. If $cB=B$ for some $c\in\mathbb F_q$, then $B$ is a vector space over $\mathbb F_p(c)$, and thus $c\in\mathbb F_{p^l}$. Let $y_j=c_0x+c_1y_1+\dots+c_{j-1}y_{j-1}$, so that
\[\begin{split}h(y_j)=&h(c_0x)+h(c_1y_1)+\dots+h(c_{j-1}y_{j-1})\\=&c_0^{p^l}h(x)+c_1^{p^l}h(y_1)+\dots+c_{j-1}^{p^l}h(y_{j-1})\\=&\left(c_0+c_1\omega+\dots+c_{j-1}\omega^{j-1}\right)u\\=&\omega^ju.\end{split}\]
Similarly we have $y_{j+1}=c_0y_1+c_1y_2+\dots+c_{j-1}y_j$ with $h(y_{j+1})=\omega^{j+1}u$, and so on. By adding each element of $B$ to $x,y_1,\dots,y_{n-1}$, we obtain $n$ distinct roots of $f(T)-t$. Thus $M=\mathbb F_q(x,y_1,\dots,y_{j-1})$ and $[M:\mathbb F_q(x)]\le p^{l(j-1)}$. This completes the proof.

(2) Note that $f(T)-t_0$ splits completely in $\mathbb F_q[T]$ if and only if $t_0=u_0^k$ for some $u_0\in\mathbb F_q$ and $h(T)-\omega^iu_0$ splits completely for any integer $i$. The latter is equivalent to $\omega^iu_0\in E$, where $E=\{h(\alpha)\mid\alpha\in\mathbb F_q\}$ is a vector space over $\mathbb F_p$. Meanwhile $\mathbb F_{p^d}$ is the smallest subfield of $\mathbb F_q$ containing $\omega$, as well as the smallest vector space over $\mathbb F_p$ in $\mathbb F_q$ containing $1,\omega,\dots,\omega^{k-1}$, for the minimal polynomial of $\omega$ over $\mathbb F_p$ has degree $d$. It follows that $u_0\mathbb F_{p^d}\subseteq E$. Comparing the dimensions we have $p^{l+d}\le q$.

(3) Let $\mathbb F_{p^e}$ for some integer $e$ be the largest subfield of $\mathbb F_q$ over which $B$ is a vector space. Then it is clear that $e$ divides $l$, and $cB=B$ if and only if $c\in\mathbb F_{p^e}$; hence $j$ is the degree of $\omega$ over $\mathbb F_{p^e}$, which is exactly the least positive integer such that $p^{ej}\equiv1\pmod k$. This implies $j=\frac d{\gcd(d,e)}$, and the inequality follows.\qed
\end{proof}

With the notation above, we give an example.

\begin{example}
Let $q=64$, $n=6$, $k=3$ and $B=\{0,c\}$ for some $c\in\mathbb F_q^*$. Since $\omega^2+\omega+1=0$, applying the proposition we have $[M:\mathbb F_q(x)]\le2$, but $\omega B\ne B$, so $M\ne\mathbb F_q(x)$. It turns out that $[M:\mathbb F_q(x)]=2$. On the other hand, $\omega$ lies in $\mathbb F_4$ and $c^2\mathbb F_4\subseteq c^2\mathbb F_{32}=\{c^2\alpha(\alpha-1)\mid\alpha\in\mathbb F_q\}=\{\alpha(\alpha-c)\mid\alpha\in\mathbb F_q\}$, so there exists a rational place of $\mathbb F_q(t)$ splitting completely in $\mathbb F_q(x)$, and hence in $M$. Then the full constant field of $M/\mathbb F_q$ is $\mathbb F_q$. This example shows that without the assumption $\gcd(n,q)=1$, Theorem \ref{extension} is not valid.
\end{example}

\section{Conclusions}

This paper has discussed a property of polynomials over finite fields with applications to locally recoverable codes and turned to characterize the corresponding Galois groups over function fields. For a polynomial of degree $n$ with $[M:\mathbb F_q(x)]$ being a specific integer, some of its properties have been presented in terms of the polynomial factorization and the arithmetic of $q$ and $n$. Besides, there are also some specific forms of polynomials with good properties, especially the Dickson polynomials. In most cases, we also proved that such polynomials are unique. The results may be applied to other research on polynomials over finite fields and their corresponding function fields. However, there are still many polynomials for which it is difficult to give a more precise condition with respect to their Galois groups or splitting fields. This may be an interesting and challenging problem.

\section*{Acknowledgement}

The authors sincerely thank the anonymous referees and the Associate Editor for their constructive and valuable comments, which have improved the quality of the paper highly. The work of the first author is supported by the China Scholarship Council. The funding corresponds to the scholarship for the Ph.D. thesis of the first author in Paris, France.

\bibliographystyle{splncs04}
\bibliography{ref}

\end{document}